\newcommand\algorithmicprocedure{\textbf{procedure}}
\newcommand{\algorithmicendprocedure}{\algorithmicend\ \algorithmicprocedure}
\newcommand\PROCEDURE[3][default]{%
	\ALC@it
	\algorithmicprocedure\ \textsc{#2}(#3)%
	\ALC@com{#1}%
	\begin{ALC@prc}%
	}
	\newcommand\ENDPROCEDURE{%
	\end{ALC@prc}%
	\ifthenelse{\boolean{ALC@noend}}{}{%
		\ALC@it\algorithmicendprocedure
	}%
}
\newenvironment{ALC@prc}{\begin{ALC@g}}{\end{ALC@g}}
\newtheorem{remark}{Remark}
\newtheorem{definition}{Definition}
\newtheorem{assumption}{Assumption}
\newtheorem{lemma}{Lemma}
\newtheorem{theorem}{Theorem}
\def\avA{\overline{A}}
\def\avL{\overline{L}}
\def\avR{\overline{R}}
\def\avT{\overline{T}}
\def\avdelta{\overline{(\delta_\ell^\pi)^2}}
\newenvironment{proof}{\begin{IEEEproof}}{\end{IEEEproof}}
\def\BibTeX{{\rm B\kern-.05em{\sc i\kern-.025em b}\kern-.08em
		T\kern-.1667em\lower.7ex\hbox{E}\kern-.125emX}}
\begin{document}
\title{Minimizing Age of Information under Latency and Throughput Constraints} 

\author{\IEEEauthorblockN{Kumar Saurav}
	\IEEEauthorblockA{
		\textit{Tata Institute of Fundamental Research}\\
		Mumbai, India. \\
		kumar124103@nitp.ac.in}
	\and
	\IEEEauthorblockN{Ashok Kumar Reddy Chavva}
	\IEEEauthorblockA{\textit{Samsung R\&D Institute India-Bangalore} \\
		\textit{Bengaluru, India}\\
		ashok.chavva@samsung.com}
}
\vspace{-2in}

\def\onehalf{\frac{1}{2}}
\def\etal{et.\/ al.\/}
\newcommand{\bydef}{\triangleq}
\newcommand{\tr}{{\it{tr}}}
\def\SNR{{\textsf{SNR}}}
\def\bydef{:=}
\def\bba{{\mathbb{a}}}
\def\bbb{{\mathbb{b}}}
\def\bbc{{\mathbb{c}}}
\def\bbd{{\mathbb{d}}}
\def\bbee{{\mathbb{e}}}
\def\bbff{{\mathbb{f}}}
\def\bbg{{\mathbb{g}}}
\def\bbh{{\mathbb{h}}}
\def\bbi{{\mathbb{i}}}
\def\bbj{{\mathbb{j}}}
\def\bbk{{\mathbb{k}}}
\def\bbl{{\mathbb{l}}}
\def\bbm{{\mathbb{m}}}
\def\bbn{{\mathbb{n}}}
\def\bbo{{\mathbb{o}}}
\def\bbp{{\mathbb{p}}}
\def\bbq{{\mathbb{q}}}
\def\bbr{{\mathbb{r}}}
\def\bbs{{\mathbb{s}}}
\def\bbt{{\mathbb{t}}}
\def\bbu{{\mathbb{u}}}
\def\bbv{{\mathbb{v}}}
\def\bbw{{\mathbb{w}}}
\def\bbx{{\mathbb{x}}}
\def\bby{{\mathbb{y}}}
\def\bbz{{\mathbb{z}}}
\def\bb0{{\mathbb{0}}}

\def\bydef{:=}
\def\ba{{\mathbf{a}}}
\def\bb{{\mathbf{b}}}
\def\bc{{\mathbf{c}}}
\def\bd{{\mathbf{d}}}
\def\bee{{\mathbf{e}}}
\def\bff{{\mathbf{f}}}
\def\bg{{\mathbf{g}}}
\def\bh{{\mathbf{h}}}
\def\bi{{\mathbf{i}}}
\def\bj{{\mathbf{j}}}
\def\bk{{\mathbf{k}}}
\def\bl{{\mathbf{l}}}
\def\bm{{\mathbf{m}}}
\def\bn{{\mathbf{n}}}
\def\bo{{\mathbf{o}}}
\def\bp{{\mathbf{p}}}
\def\bq{{\mathbf{q}}}
\def\br{{\mathbf{r}}}
\def\bs{{\mathbf{s}}}
\def\bt{{\mathbf{t}}}
\def\bu{{\mathbf{u}}}
\def\bv{{\mathbf{v}}}
\def\bw{{\mathbf{w}}}
\def\bx{{\mathbf{x}}}
\def\by{{\mathbf{y}}}
\def\bz{{\mathbf{z}}}
\def\b0{{\mathbf{0}}}
\def\opt{\mathsf{OPT}}
\def\on{\mathsf{ON}}
\def\off{\mathsf{OFF}}
\def\bA{{\mathbf{A}}}
\def\bB{{\mathbf{B}}}
\def\bC{{\mathbf{C}}}
\def\bD{{\mathbf{D}}}
\def\bE{{\mathbf{E}}}
\def\bF{{\mathbf{F}}}
\def\bG{{\mathbf{G}}}
\def\bH{{\mathbf{H}}}
\def\bI{{\mathbf{I}}}
\def\bJ{{\mathbf{J}}}
\def\bK{{\mathbf{K}}}
\def\bL{{\mathbf{L}}}
\def\bM{{\mathbf{M}}}
\def\bN{{\mathbf{N}}}
\def\bO{{\mathbf{O}}}
\def\bP{{\mathbf{P}}}
\def\bQ{{\mathbf{Q}}}
\def\bR{{\mathbf{R}}}
\def\bS{{\mathbf{S}}}
\def\bT{{\mathbf{T}}}
\def\bU{{\mathbf{U}}}
\def\bV{{\mathbf{V}}}
\def\bW{{\mathbf{W}}}
\def\bX{{\mathbf{X}}}
\def\bY{{\mathbf{Y}}}
\def\bZ{{\mathbf{Z}}}
\def\b1{{\mathbf{1}}}

\def\bbA{{\mathbb{A}}}
\def\bbB{{\mathbb{B}}}
\def\bbC{{\mathbb{C}}}
\def\bbD{{\mathbb{D}}}
\def\bbE{{\mathbb{E}}}
\def\bbF{{\mathbb{F}}}
\def\bbG{{\mathbb{G}}}
\def\bbH{{\mathbb{H}}}
\def\bbI{{\mathbb{I}}}
\def\bbJ{{\mathbb{J}}}
\def\bbK{{\mathbb{K}}}
\def\bbL{{\mathbb{L}}}
\def\bbM{{\mathbb{M}}}
\def\bbN{{\mathbb{N}}}
\def\bbO{{\mathbb{O}}}
\def\bbP{{\mathbb{P}}}
\def\bbQ{{\mathbb{Q}}}
\def\bbR{{\mathbb{R}}}
\def\bbS{{\mathbb{S}}}
\def\bbT{{\mathbb{T}}}
\def\bbU{{\mathbb{U}}}
\def\bbV{{\mathbb{V}}}
\def\bbW{{\mathbb{W}}}
\def\bbX{{\mathbb{X}}}
\def\bbY{{\mathbb{Y}}}
\def\bbZ{{\mathbb{Z}}}

\def\cA{\mathcal{A}}
\def\cB{\mathcal{B}}
\def\cC{\mathcal{C}}
\def\cD{\mathcal{D}}
\def\cE{\mathcal{E}}
\def\cF{\mathcal{F}}
\def\cG{\mathcal{G}}
\def\cH{\mathcal{H}}
\def\cI{\mathcal{I}}
\def\cJ{\mathcal{J}}
\def\cK{\mathcal{K}}
\def\cL{\mathcal{L}}
\def\cM{\mathcal{M}}
\def\cN{\mathcal{N}}
\def\cO{\mathcal{O}}
\def\cP{\mathcal{P}}
\def\cQ{\mathcal{Q}}
\def\cR{\mathcal{R}}
\def\cS{\mathcal{S}}
\def\cT{\mathcal{T}}
\def\cU{\mathcal{U}}
\def\cV{\mathcal{V}}
\def\cW{\mathcal{W}}
\def\cX{\mathcal{X}}
\def\cY{\mathcal{Y}}
\def\cZ{\mathcal{Z}}

\def\sfA{\mathsf{A}}
\def\sfB{\mathsf{B}}
\def\sfC{\mathsf{C}}
\def\sfD{\mathsf{D}}
\def\sfE{\mathsf{E}}
\def\sfF{\mathsf{F}}
\def\sfG{\mathsf{G}}
\def\sfH{\mathsf{H}}
\def\sfI{\mathsf{I}}
\def\sfJ{\mathsf{J}}
\def\sfK{\mathsf{K}}
\def\sfL{\mathsf{L}}
\def\sfM{\mathsf{M}}
\def\sfN{\mathsf{N}}
\def\sfO{\mathsf{O}}
\def\sfP{\mathsf{P}}
\def\sfQ{\mathsf{Q}}
\def\sfR{\mathsf{R}}
\def\sfS{\mathsf{S}}
\def\sfT{\mathsf{T}}
\def\sfU{\mathsf{U}}
\def\sfV{\mathsf{V}}
\def\sfW{\mathsf{W}}
\def\sfX{\mathsf{X}}
\def\sfY{\mathsf{Y}}
\def\sfZ{\mathsf{Z}}

\def\bydef{:=}
\def\sfa{{\mathsf{a}}}
\def\sfb{{\mathsf{b}}}
\def\sfc{{\mathsf{c}}}
\def\sfd{{\mathsf{d}}}
\def\sfee{{\mathsf{e}}}
\def\sfff{{\mathsf{f}}}
\def\sfg{{\mathsf{g}}}
\def\sfh{{\mathsf{h}}}
\def\sfi{{\mathsf{i}}}
\def\sfj{{\mathsf{j}}}
\def\sfk{{\mathsf{k}}}
\def\sfl{{\mathsf{l}}}
\def\sfm{{\mathsf{m}}}
\def\sfn{{\mathsf{n}}}
\def\sfo{{\mathsf{o}}}
\def\sfp{{\mathsf{p}}}
\def\sfq{{\mathsf{q}}}
\def\sfr{{\mathsf{r}}}
\def\sfs{{\mathsf{s}}}
\def\sft{{\mathsf{t}}}
\def\sfu{{\mathsf{u}}}
\def\sfv{{\mathsf{v}}}
\def\sfw{{\mathsf{w}}}
\def\sfx{{\mathsf{x}}}
\def\sfy{{\mathsf{y}}}
\def\sfz{{\mathsf{z}}}
\def\sf0{{\mathsf{0}}}

\def\Nt{{N_t}}
\def\Nr{{N_r}}
\def\Ne{{N_e}}
\def\Ns{{N_s}}
\def\Es{{E_s}}
\def\No{{N_o}}
\def\sinc{\mathrm{sinc}}
\def\dmin{d^2_{\mathrm{min}}}
\def\vec{\mathrm{vec}~}
\def\kron{\otimes}
\def\Pe{{P_e}}
\newcommand{\expeq}{\stackrel{.}{=}}
\newcommand{\expg}{\stackrel{.}{\ge}}
\newcommand{\expl}{\stackrel{.}{\le}}
\def\SIR{{\mathsf{SIR}}}

\def\nn{\nonumber}

\maketitle
\begin{abstract}
    We consider a scheduling problem pertinent to a base station (BS) that serves heterogeneous users (UEs). We focus on a downlink model, with users that are either interested in large throughput, low latency (the time difference between the packet arrival at the BS, and reception at the UE), or low age-of-information (AoI, the difference between the current time and the arrival time of the latest packet at the BS that has been received at the UE). In each time step, the BS may serve a limited number of UEs. The objective is to find a scheduling algorithm that meets the individual service requirements of all the UEs. In this paper, we formulate the aforementioned objective as an optimization problem, and propose candidate solutions (algorithms) that rationalize the tradeoff between throughput, latency and the AoI. Then, we analyze the performance of the proposed algorithms both analytically and numerically, and discuss techniques to extend their applicability to other general setups.
\end{abstract}

\begin{IEEEkeywords}
	age of information, latency, throughput. 
\end{IEEEkeywords}

\maketitle

\section{Introduction} \label{sec:introduction}

Scheduling problems have always been at the heart of communication/computer networks, especially with large number of devices (UEs) \cite{pantelidou2011scheduling}. However, most classical work on scheduling have focused on homogeneous networks, where all UEs have similar service requirements \cite{hou2009theory,lee2007reverse,buyukkoc1985cmu,saurav2022scheduling}. 
In this paper, we consider a scheduling problem pertinent to modern networks where UEs have heterogeneous service (QoS) requirements. 
For example, consider a private 5G network, where a base station (BS) serves three types of UEs:

1) Televisions, music systems, etc. that need to download large volumes of data. For these UEs, the BS needs to ensure high \emph{throughput} (information bits per unit time).

2) Robotic systems that need to be fed tasks and time-sensitive instructions. The BS needs to ensure that the \emph{latency} (i.e. the time-difference between when an instruction arrives at the BS, and when it is received at the UE) of the instruction packets for these UEs be minimized.

3) Air conditioners, security alarms, etc. that need to know the current state of the environment. The BS should periodically transmit information updates to these UEs such that at any time, their \emph{age of information} (AoI \cite{kaul2012status}; equal to the difference between the current time and the generation time of the latest update received at the UE) remains small.

\noindent Since UEs have diverse QoS requirements, and the BS can serve only a subset of UEs at a time, a critical question in such networks is: which UEs to prioritize (serve) at any time? 
In this paper, we address this question by studying the intricate tradeoffs between the three most common QoS metrics, i.e. throughput, latency and AoI, and developing a general framework to optimize them in a heterogeneous network. 
In particular, we consider an optimization problem where the objective is to minimize the weighted sum of the average AoI of a subset of UEs while satisfying the latency and throughput constraints of other subsets of UEs. Then we derive a general scheduling algorithm to solve this problem, and analyze its performance numerically.

\subsection*{Related Work}
For traditional QoS metrics i.e., throughput and latency, plenty of results are available across different network models \cite{mirzaeinnia2020latency,saxena2018review}. This includes works that have considered the metrics in isolation (e.g. \cite{kushner2004convergence}), as well as in conjuction (e.g. \cite{hou2009theory}). However, over the past decade, the traditional metrics have been found to be insufficient in capturing the needs of time-sensitive applications, especially that involves maintaining fresh data at network nodes. This includes applications such as remote estimation \cite{sun2017remote}, monitoring \cite{saurav2022scheduling}, control \cite{ayan2019age}, etc. For such applications, multiple metrics have been proposed such as age of information \cite{kaul2012status}, age of incorrect information \cite{maatouk2022age}, age of incorrect estimates \cite{joshi2021minimization}, value of information \cite{ayan2019age}, etc. Among all these metrics, age of information (AoI) has garnered most attention because of its generality, and ease in modeling real-world problems. However, despite the greater attention, work on some fundamental aspects of AoI are still under progress. This includes analyzing the inherent tradeoffs between AoI and other traditional metrics such as throughput and latency. 

In past, multiple attempts have been made towards understanding the tradeoff between AoI and throughput \cite{kadota2018optimizing,kosta2018age,saurav2021minimizing,saurav2022scheduling,gopal2019coexistence,gopal2018game,bhat2020throughput}. For example, \cite{kadota2018optimizing,kosta2018age} considered the problem of minimizing the weighted sum of AoI of UEs under minimum throughput constraint. Similarly, \cite{saurav2021minimizing,saurav2022scheduling} considered the problem of minimizing a linear combination of AoI and transmission cost, where transmission cost is proportional to throughput of UEs. In a decentralized setting, \cite{gopal2019coexistence,gopal2018game} analyzed the coexistence of AoI and throughput optimizing networks using game-theoretic approach. 

Tradeoff between AoI and latency is relatively less explored. Some relevant works are \cite{al2019information,sun2021age,fountoulakis2023scheduling}. In \cite{al2019information}, an optimization problem has been considered, where the goal is to minimize a linear combination of the average AoI of UEs and the probability of their latency (of transmitted packets) being greater than a threshold. Towards the goal, a randomized scheduling algorithm has been proposed, and analyzed numerically. While \cite{sun2021age,fountoulakis2023scheduling} considered a system where the goal is to minimize the average AoI of a subset of UEs, while ensuring that at least a certain fraction of all packets of another subset of UEs have latency less than a threshold. For this problem, \cite{sun2021age,fountoulakis2023scheduling} used the Lyapunov-drift plus penalty method to design a scheduling algorithms that is shown to satisfy the latency constraint. Also, upper bounds have been derived on the AoI cost, in terms of the system parameters. However, \cite{sun2021age,fountoulakis2023scheduling} have assumed generate-at-will model where a fresh packet  can be generated for AoI-sensitive UEs at any time, which is limiting in practical setups.
In this paper, we consider a general setup with stochastic packet arrivals, and conduct a comprehensive analysis of the three-way tradeoff between AoI, latency and throughput by solving the problem described next.

\section{System Model} \label{sec:SysModel}

Consider a base station (BS) that serves a set of users (UEs) $\cI$ (Figure \ref{fig:sysmodel}).  
Let the time be discretized into slots of unit length (henceforth, the terms `time' and `slot' are used interchangeably to refer to a slot $t\in\{1,2,3,\cdots\}$). 
At the start of each slot, for each UE $\ell$, a packet arrives at the BS with probability $q_\ell>0$, independently. We assume that the BS stores all the arriving packets until they are successfully transmitted (or intentionally discarded).
 
\begin{figure}[htbp]
	\centerline{\includegraphics[scale=0.5]{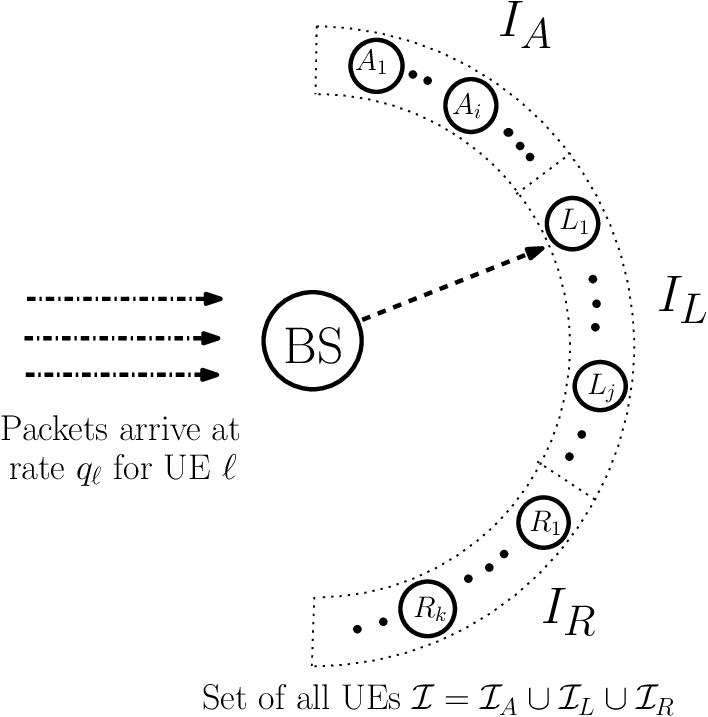}}
	\caption{System model.}
	\label{fig:sysmodel}
\end{figure}

Further, in each slot $t$, the BS can transmit at most one packet, to at most one UE. If the BS transmits a packet to UE $\ell$, then the transmission succeeds (the packet is received at UE $\ell$ by the end of slot $t$) with probability $p_\ell$, and fails otherwise. 

Based on the application, the UEs may have different quality-of-service (QoS) requirements. In this paper, we consider following three QoS metrics: 
\paragraph{Age of Information (AoI)} 
Let $g_{\ell i}$ denote the slot in which packet $i$ (for UE $\ell$) arrives at the BS. In any slot $t\ge g_{\ell i}$, the age of packet $i$ is equal to $t-g_{\ell i}$. Further, let $r_{\ell i}$ denote the slot in which the packet $i$ is received at UE $\ell$. In any slot $t$, the age of information (AoI; $A_\ell(t)$) of UE $\ell$ is defined as the minimum of the age of all packets received at UE $\ell$ by the end of slot $t-1$, i.e.
\begin{align} \label{eq:AoI}
	A_\ell(t)=\underset{\{i|r_{\ell i}\le t-1\}}{\min}\{t-g_{\ell i}\}=t-\lambda_\ell(t),
\end{align} 
where $\lambda_\ell(t)=\max_{\{i|r_{\ell i}\le t-1\}} g_{\ell i}$.
Accordingly, the average AoI for UE $\ell$ until slot $t$ is $\avA_\ell(t)=\sum_{\tau=1}^t A_\ell(\tau)/t$, and the long-term average AoI
\begin{align} \label{eq:avAoI}
	\avA_\ell=\lim_{t\to\infty}\avA_\ell(t)=\lim_{t\to\infty}\sum_{\tau=1}^t A_\ell(\tau)/t.
\end{align}

\begin{remark} \label{remark:AoI-not-all-packets}
	At any time, the AoI (for a UE) only depends on the arrival time of the latest packet received at the UE. Therefore, for minimizing the AoI for a UE $\ell$, instead of transmitting every packet that arrives for UE $\ell$, the BS may transmit only a subset of packets based on their arrival times. Thus, the remaining slots may be used to serve other UEs. 
\end{remark}

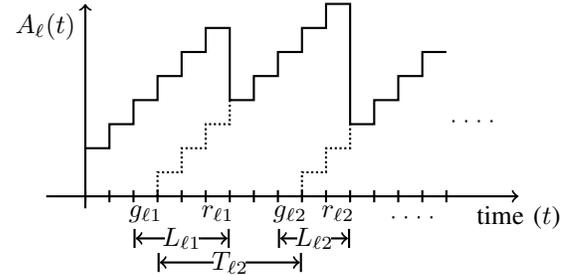
\begin{figure} [htbp] 
	\begin{center}
		\begin{tikzpicture}[thick,scale=0.8, every node/.style={scale=1}]
		\draw[->] (-0.25,0) to (7.6,0) node[below]{time ($t$)};
		\draw[->] (0.4,-0.2) to (0.4,3.2) node[below left]{$A_\ell(t)$};
		\draw (0.4,0.8) to (0.8,0.8) to (0.8,1.2) to (1.2,1.2) to (1.2,1.6) to (1.6,1.6) to (1.6,2) to (2,2) to (2,2.4) to (2.4,2.4) to (2.4,2.8) to (2.8,2.8) to (2.8, 1.6) to (3.2,1.6) to (3.2,2) to (3.6,2) to (3.6,2.4) to (4,2.4) to (4,2.8) to (4.4,2.8) to (4.4,3.2) to (4.8,3.2) to (4.8,1.2) to (5.2,1.2) to (5.2,1.6) to (5.6,1.6) to (5.6,2) to (6,2) to (6,2.4) to (6.4,2.4); 
		
		\draw[densely dotted] (1.6,0) to (1.6,0.4) to (2,0.4) to (2,0.8) to (2.4,0.8) to (2.4,1.2) to (2.8,1.2) to (2.8,1.6);
		\draw (1.4,0) node[below]{$g_{\ell 1}$};
		\draw (2.6,0) node[below]{$r_{\ell 1}$};
		\draw[densely dotted] (4,0) to (4,0.4) to (4.4,0.4) to (4.4,0.8) to (4.8,0.8) to (4.8,1.2);
		\draw (3.8,0) node[below]{$g_{\ell 2}$};
		\draw (4.6,0) node[below]{$r_{\ell 2}$};
		
		\draw[loosely dotted] (6.5,1.25) to (7.3,1.25); 
		
		\draw (0.4,-0.1) to (0.4,0.1);
		\draw (0.8,-0.1) to (0.8,0.1);
		\draw (1.2,-0.1) to (1.2,0.1);
		\draw (1.6,-0.1) to (1.6,0.1);
		\draw (2,-0.1) to (2,0.1);
		\draw (2.4,-0.1) to (2.4,0.1);
		\draw (2.8,-0.1) to (2.8,0.1);
		\draw (3.2,-0.1) to (3.2,0.1);
		\draw (3.6,-0.1) to (3.6,0.1);
		\draw (4,-0.1) to (4,0.1);
		\draw (4.4,-0.1) to (4.4,0.1);
		\draw (4.8,-0.1) to (4.8,0.1);
		\draw (5.2,-0.1) to (5.2,0.1);
		\draw (5.6,-0.1) to (5.6,0.1); 
		\draw (6,-0.1) to (6,0.1);
		\draw (6.4,-0.1) to (6.4,0.1);
		
		\draw[|<->|] (1.2,-0.7) -- (2.8,-0.7) node[rectangle,inner sep=-1pt,midway,fill=white]{$L_{\ell 1}$}; 
		\draw[|<->|] (3.6,-0.7) -- (4.8,-0.7) node[rectangle,inner sep=-1pt,midway,fill=white]{$L_{\ell 2}$};
		\draw[|<->|] (1.6,-1.1) -- (4,-1.1) node[rectangle,inner sep=-1pt,midway,fill=white]{$T_{\ell 2}$};
		
		\draw[loosely dotted] (5.5,-0.35) to (6.3,-0.35);
		
		\end{tikzpicture}
		\caption{Sample plot illustrating AoI $A_\ell(t)$ against time, and latency $L_{\ell i}$ of packet $i$.\vspace{-4ex}} 
		\label{fig:age} 
	\end{center}
\end{figure}

\paragraph{Latency} For any packet, the latency is defined as the number of slots the packet remains at the BS, i.e., the number of slots since its arrival at the BS, to until it is successfully transmitted (received at the destination UE). 
Therefore, if a packet $i$ that arrived at the BS in slot $g_{\ell i}$ is successfully transmitted in slot $r_{\ell i}$, then the latency for the packet is $L_{\ell i}=r_{\ell i}-g_{\ell i}+1$. For UE $\ell$, we define average latency as the average of the latency of all packets that arrive at the BS for UE $\ell$. Therefore, if $\cG_\ell(t)$ denotes the set of packets that arrive at the BS for UE $\ell$ until time $t$, and $|\cG_\ell(t)|$ denotes the cardinality of $\cG_\ell(t)$, then the average latency until time $t$ is equal to 
\begin{align} \label{eq:avLatency-t}
	\avL_\ell(t)=\frac{\sum_{i\in\cG_\ell(t)}L_{\ell i}}{|\cG_\ell(t)|},
\end{align}
where while computing $L_{\ell i}$ for untransmitted packet $i\in\cG_\ell(t)$, we use $r_{\ell i}=t$.

\begin{remark}
Note that when $t\to\infty$, $|\cG_\ell(t)|=q_\ell \cdot t$ with probability $1$ (using the strong law of large numbers). Therefore, the long-term average latency for UE $\ell$,
\begin{align} \label{eq:avLatency}
	\avL_\ell=\lim_{t\to\infty} \avL_\ell(t)
	=\lim_{t\to\infty}\frac{\sum_{i\in\cG_\ell(t)}L_{\ell i}}{q_\ell\cdot t}.
\end{align}
\end{remark}

\begin{remark} \label{remark:latency-tx-every-packet}
	Since $\avL_\ell$ \eqref{eq:avLatency} is a function of the latency of all packets in $\cG_\ell(t)$, for minimizing $\avL_\ell$, the BS must never discard any packet in $\cG_\ell(t)$ (i.e., transmit every packet in $\cG_\ell(t)$), $\forall t$. 
\end{remark}

\paragraph{Throughput}
For a UE, throughput is defined as the average number of packets that it receives from the BS per slot. Formally, if $\cR_\ell(t)$ denotes the set of packets that UE $\ell$ received from the BS until time $t$, then the throughput of UE $\ell$ is given as 
\begin{align} \label{eq:throughput}
	\avR_\ell=\lim_{t\to\infty}\frac{|\cR_\ell(t)|}{t},
\end{align} 
where $|\cR_\ell(t)|$ denotes the cardinality of set $\cR_\ell(t)$.

Based on the applications/QoS requirements of the UEs, we partition the set of UEs  $\cI$ into subsets $\cI_A$, $\cI_L$ and $\cI_R$. i) For each UE $i\in\cI_A$, the average AoI needs to be `small', ii) for each UE $j\in\cI_L$, the average latency needs to be `small', and iii) for each UE $k\in\cI_R$, the throughput should be greater than a fixed threshold (that may be different for each UE). 

\begin{remark} \label{remark:q=1}
	Note that in any slot, if the BS does not have any packet for UE $k\in\cI_R$ (i.e., the BS has transmitted all the packets that arrived for UE $k$ so far), then the UE $k$ already has maximum possible throughput until the current slot (best case for UE $k$).  
	Thus, without loss of generality, we assume that the BS always has some packet for each UE $k\in\cI_R$ to transmit (or equivalently, let the arrival rate $q_k=1$, $\forall k\in\cI_R$). 
\end{remark}

\begin{definition} \label{def:causal-policy}
	A centralized causal scheduling policy (in short, causal policy) is an algorithm that in each slot $t$, using only the causal information available at the BS, decides which packet/UE the BS must transmit/serve.
\end{definition}

In light of the QoS requirements of UEs in $\cI$, the goal is to find a causal policy (Definition \ref{def:causal-policy}) that simultaneously minimizes the average AoI of the UEs in $\cI_A$, the average latency of the UEs in $\cI_L$, and satisfies the throughput constraints of the UEs in $\cI_R$.  
We formulate this objective as the following optimization problem.
\begin{subequations} \label{eq:prob-form-2}
	\begin{align} \label{eq:alt-objective}
		\min_{\pi\in\Pi} \ \ & \sum_{i\in\cI_A}\rho_i\avA_i^\pi, \\ 
		\label{eq:alt-constraint-latency}
		s.t. \ \ & \avL_j^\pi\le \beta_j, \ \ \forall j\in\cI_L, \\
		\label{eq:alt-constraint-throughput}
		\ \ & \avR_k^\pi\ge \alpha_k, \ \ \forall k\in\cI_R,
	\end{align}
\end{subequations}
where $\Pi$ denotes the set of all causal policies, $\avA_i^\pi, \avL_j^\pi$ and $\avR_k^\pi$ respectively denotes the long-term average AoI \eqref{eq:avAoI}, long-term average latency \eqref{eq:avLatency} and average throughput \eqref{eq:throughput} under policy $\pi$,  and $\rho_i,\beta_j,\alpha_k>0$ are constant parameters, $\forall i\in\cI_A$, $\forall j\in\cI_L$ and $\forall k\in\cI_R$.

In the rest of this paper, we solve problem \eqref{eq:prob-form-2} as follows. First, in Section \ref{sec:preliminaries} we consider a related problem where the objective is to minimize a weighted sum of the average AoI of UEs in $\cI_A$ and the average latency of UEs in $\cI_L$, and propose a candidate policy $\pi_H$ to solve that problem. Subsequently, in Section \ref{sec:find-Lagrangian} we generalize $\pi_H$ and design two policies $\pi_{VW}$ and $\pi_{RD}$ to solve problem \eqref{eq:prob-form-2}. Finally, in Section \ref{sec:numerical-results} we analyze the performance of policy $\pi_H$ and its variants using numerical simulations.

\section{A Related Problem} \label{sec:preliminaries}

Note that both AoI and latency are time-sensitive, i.e. they depend on the slot in which a packet is transmitted. However, they are difficult to compare in problem \eqref{eq:prob-form-2}, where the AoI terms appear in the objective, whereas the latency is in the constraint. Therefore, consider the following alternate formulation of problem \eqref{eq:prob-form-2}, with both the AoI and latency in the objective function:
\begin{subequations} \label{eq:prob-form-1}
	\begin{align} \label{eq:objective}
		\min_{\pi\in\Pi} \ \ & \sum_{i\in\cI_A}\rho_i\avA_i^\pi+\sum_{j\in\cI_L} \rho_j \avL_j^\pi, \\
		\label{eq:constraint}
		s.t. \ \ & \avR_k^\pi\ge \alpha_k, \ \ \forall k\in\cI_R,
	\end{align}
\end{subequations}
where $\rho_j\ge 0$ ($\forall j\in\cI_L$) are known constants. The intuition is that if we can solve problem \eqref{eq:prob-form-1} for any $\rho_j\ge 0$, then we can solve problem \eqref{eq:prob-form-2} by finding suitable $\rho_j$'s, and converting \eqref{eq:prob-form-2} into \eqref{eq:prob-form-1}. Therefore, first we try to solve problem \eqref{eq:prob-form-1}. For this, we need some preliminaries that we discuss next.

\subsection{Preliminaries}

Let $\Pi_F\subseteq \Pi$ denote the set of all feasible policies, i.e., the set of causal policies for which the cost \eqref{eq:objective} is finite, and constraint \eqref{eq:constraint} is satisfied $\forall k\in\cI_R$. 

\begin{lemma} \label{lemma:throughput-UEs}
	For any policy $\pi\in\Pi_F$, the throughput
	\begin{enumerate}
		\item $\avR_i^\pi>0$, $\forall i\in\cI_A$,
		\item $\avR_j^\pi=q_j$, $\forall j\in\cI_L$,
		\item $\avR_k^\pi\ge\alpha_k$,  $\forall k\in\cI_R$, and
		\item $\sum_{i\in\cI_A}\avR_i^\pi/p_i\le 1-\sum_{j\in\cI_L}q_j/p_j-\sum_{k\in\cI_R}\alpha_k/p_k$.
	\end{enumerate}
\end{lemma}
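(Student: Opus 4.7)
The lemma has four parts, which I would handle in turn. Claim (3) is immediate, since the throughput inequality \eqref{eq:constraint} is one of the two defining conditions of $\Pi_F$. Claims (1) and (2) are necessity conditions that follow from the finiteness of the objective \eqref{eq:objective} via ``degeneracy blow-up'' arguments for AoI and latency respectively. Claim (4) is a capacity argument based on the one-transmission-per-slot structure of the model.

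For (1), I would argue by contradiction. Suppose $\avR_i^\pi=0$ for some $i\in\cI_A$ and let $N_i(t):=|\cR_i(t)|$, so that $N_i(t)/t\to 0$. Between two consecutive successful receptions, $A_i(\cdot)$ increases by one per slot, so on any time window of length $\Delta$ containing $N$ inter-reception gaps the cumulative AoI is at least $\Delta^2/(2N)-O(\Delta)$ (by convexity: $\sum_k\Delta_k^2$ subject to $\sum_k\Delta_k=\Delta$ is minimized by equal gaps). Applied with $\Delta=t$ and $N=N_i(t)+1$, this yields $\sum_{\tau=1}^t A_i(\tau)\ge t^2/(2(N_i(t)+1))-O(t)$; dividing by $t$ gives $\avA_i^\pi=\infty$, contradicting the finiteness of \eqref{eq:objective} (for $\rho_i>0$).

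For (2), the bound $\avR_j^\pi\le q_j$ is immediate, since UE $j$ can only receive packets that actually arrive. For the reverse inequality, let $D_j(t):=|\cG_j(t)|-|\cR_j(t)|$ denote the BS backlog of $j$-destined packets; if $\avR_j^\pi<q_j$, then by SLLN on the arrival process (rate $q_j$) together with $|\cR_j(t)|/t\to \avR_j^\pi$ we have $D_j(t)/t\to q_j-\avR_j^\pi>0$. A standard double-counting (each in-system packet contributes one unit of latency per slot of waiting, with untransmitted packets treated via the $r_{\ell i}=t$ convention specified after \eqref{eq:avLatency-t}) gives the Little-type identity $\sum_{i\in\cG_j(t)}L_{ji}=\sum_{\tau=1}^t D_j(\tau)=\Theta(t^2)$; dividing by $q_j t$ in \eqref{eq:avLatency} yields $\avL_j^\pi=\infty$, again contradicting finiteness of the cost.

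Finally, for (4), let $T_\ell(t)$ be the number of slots in $\{1,\ldots,t\}$ in which the BS attempted a transmission to UE $\ell$. The one-transmission-per-slot rule forces $\sum_{\ell\in\cI}T_\ell(t)\le t$. Since the per-attempt success indicators are i.i.d.\ Bernoulli$(p_\ell)$ and independent of the causal history, Kolmogorov's SLLN gives $|\cR_\ell(t)|=p_\ell T_\ell(t)+o(T_\ell(t))$ almost surely, hence $T_\ell(t)/t\to \avR_\ell^\pi/p_\ell$. Summing over $\ell\in\cI$ yields $\sum_\ell \avR_\ell^\pi/p_\ell\le 1$, and substituting $\avR_j^\pi=q_j$ from (2) together with $\avR_k^\pi\ge\alpha_k$ from (3) produces the claimed inequality. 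I expect the most delicate step to be the backlog calculation in (2), specifically verifying the Little-type identity under the paper's nonstandard latency convention for untransmitted packets and cleanly transferring the linear growth rate of $D_j(t)$ into a divergence of $\avL_j^\pi$.
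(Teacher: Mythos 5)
Your proof is correct and follows essentially the same four-part structure and key ideas as the paper: finiteness of the AoI cost forces $\avR_i^\pi>0$, finiteness of the latency cost forces $\avR_j^\pi=q_j$, the throughput bound is definitional for $\Pi_F$, and the slot budget together with an SLLN argument relating attempted to successful transmissions gives part~(4). You simply fill in quantitative details (the convexity lower bound on cumulative AoI in part~(1) and the Little-type identity in part~(2)) that the paper states more tersely, e.g.\ by just citing Remark~\ref{remark:latency-tx-every-packet} for part~(2).
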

\begin{proof}
	See Appendix \ref{app:proof-lemma-throughput-UEs}.
\end{proof}

\begin{theorem} \label{thm:feasibility}
	The optimization problem \eqref{eq:prob-form-1} is feasible, i.e., the set $\Pi_F$ is non-empty, only if $\sum_{j\in\cI_L}\frac{q_j}{p_j}+\sum_{k\in\cI_R}\frac{\alpha_k}{p_k}<1$.
\end{theorem}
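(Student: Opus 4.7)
My plan is to derive the theorem directly from Lemma \ref{lemma:throughput-UEs}, which already encodes almost everything needed. So the real work has been pushed into that lemma (and its appendix), and the theorem itself will be a short contrapositive-style squeeze. I would fix an arbitrary $\pi \in \Pi_F$ (which exists by the assumed feasibility) and combine parts~1 and~4 of the lemma.

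Concretely, starting from part~4 of Lemma \ref{lemma:throughput-UEs}, I rearrange to obtain
\begin{equation*}
\sum_{j\in\cI_L}\frac{q_j}{p_j}+\sum_{k\in\cI_R}\frac{\alpha_k}{p_k}\ \le\ 1-\sum_{i\in\cI_A}\frac{\avR_i^\pi}{p_i}.
\end{equation*}
Part~1 asserts $\avR_i^\pi > 0$ for every $i \in \cI_A$, and $p_i \in (0,1]$ ensures each summand on the right is strictly positive. Hence, as long as $\cI_A \neq \emptyset$ (the natural case of interest here, since the whole paper is about minimizing AoI), the right-hand side is strictly less than $1$, which is precisely the desired necessary condition.

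The main obstacle, and the reason this is so short, is that the conceptual content lives inside Lemma \ref{lemma:throughput-UEs}. In particular, parts~2 and~4 of that lemma amount to a slot/work-conservation argument: transmitting to UE~$\ell$ consumes one slot but yields $p_\ell$ expected successful receptions, so the fraction of slots devoted to UE~$\ell$ is $\avR_\ell^\pi/p_\ell$, and the sum of these fractions across all UEs cannot exceed one; furthermore $\avL_j^\pi < \infty$ for $j \in \cI_L$ forces $\avR_j^\pi = q_j$ (no packet may be dropped or indefinitely delayed, cf.\ Remark~\ref{remark:latency-tx-every-packet}). The second obstacle is the corner case $\cI_A = \emptyset$, in which the chain above only yields the non-strict bound $\le 1$; there one must instead appeal to the finite-latency requirement, arguing that stability of the per-UE queue for any $j \in \cI_L$ forces the aggregate service demand to be strictly below the server's capacity. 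Either way, the strict inequality claimed in the theorem follows.
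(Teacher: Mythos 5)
Your proof is correct and follows the same route as the paper's: both combine parts~1 and~4 of Lemma~\ref{lemma:throughput-UEs}, observing that each $\avR_i^\pi/p_i>0$ for $i\in\cI_A$ forces the right-hand side of the rearranged inequality strictly below~$1$. Your observation about the $\cI_A=\emptyset$ corner case is a fair one, but the paper silently makes the same assumption $\cI_A\neq\emptyset$ (consistent with the objective \eqref{eq:objective} being a sum over $\cI_A$), so you are not missing anything the paper's own proof supplies.
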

\begin{proof}
	For $\Pi_F$ to be non-empty, there must exist some policy $\pi\in\Pi$ that satisfies all the throughput relations of Lemma \ref{lemma:throughput-UEs}. However, any policy $\pi$ may satisfy throughput relations $1$ and $4$ (of Lemma \ref{lemma:throughput-UEs}) only if $\sum_{j\in\cI_L}q_j+\sum_{k\in\cI_R}\alpha_k<1$. Hence, $\sum_{j\in\cI_L}q_j+\sum_{k\in\cI_R}\alpha_k<1$ is a necessary condition for $\Pi_F$ to be non-empty.  
\end{proof}

\begin{remark} \label{remark:feasible-problem}
In the rest of this paper, without loss of generality, we assume that problem \eqref{eq:prob-form-1} is feasible, i.e. $\sum_{j\in\cI_L}q_j/p_j+\sum_{k\in\cI_R}\alpha_k/p_k<1$.
\end{remark}

Next, for policies $\pi\in\Pi_F$, we derive an equivalent formulation for the optimization problem \eqref{eq:prob-form-1}. To this end, we define some notations as follows. 
Under policy $\pi\in\Pi_F$, let $\cR_\ell^\pi(t)$ denote the subset of packets that the BS successfully transmits to UE $\ell\in\cI$, until slot $t$. 
Also, let $g_{\ell i}^\pi$ and $r_{\ell i}^\pi$ respectively denote the slots in which the $i^{th}$ packet of $\cR_{\ell i}^\pi(t)$ ($i^{th}$, in increasing order of arrival time at the BS) arrived at the BS, and got received at UE $\ell$. 
By definition, under policy $\pi\in\Pi_F$, the latency $L_{\ell i}^\pi=r_{\ell i}^\pi-g_{\ell i}^\pi+1$. 

Further, define $T_{\ell i}^\pi=g_{\ell i}^\pi-g_{\ell (i-1)}^\pi$ as the inter-arrival time of packets in $\cR_\ell^\pi(t)$ (see Figure \ref{fig:age}). 
Also, let
	\begin{align} \label{eq:avT}
		\avT_\ell^\pi=\lim_{t\to\infty}\frac{\sum_{i\in\cR_\ell^\pi(t)}T_{\ell i}^\pi}{|\cR_\ell^\pi(t)|} 
	\end{align} 
	denote the average inter-arrival time of packets in $\cR_\ell^\pi(t)$ (i.e., the average inter-arrival time of packets that are received at UE $\ell\in\cI$), and $\delta_{\ell i}^\pi=T_{\ell i}^\pi-\avT_\ell^\pi$. 
\begin{lemma} \label{lemma:relation-throughput-cycle-length}
	For each UE $\ell\in\cI$, $\avT_\ell^\pi=1/\avR_\ell^\pi$.
\end{lemma}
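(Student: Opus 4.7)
The plan is a telescoping-plus-residual argument. Let me abbreviate $N_\ell^\pi(t) \bydef |\cR_\ell^\pi(t)|$ and recall that by construction the packets of $\cR_\ell^\pi(t)$ are indexed in increasing order of BS-arrival time, so $g_{\ell 1}^\pi < g_{\ell 2}^\pi < \cdots < g_{\ell N_\ell^\pi(t)}^\pi$. The definition $T_{\ell i}^\pi = g_{\ell i}^\pi - g_{\ell,i-1}^\pi$ then telescopes,
\[
\sum_{i=2}^{N_\ell^\pi(t)} T_{\ell i}^\pi \;=\; g_{\ell N_\ell^\pi(t)}^\pi - g_{\ell 1}^\pi.
\]
Since $\pi\in\Pi_F$, Lemma~\ref{lemma:throughput-UEs} guarantees $\avR_\ell^\pi>0$, whence $N_\ell^\pi(t)\to\infty$ and the bounded constant $g_{\ell 1}^\pi$ is negligible. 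Dividing by $N_\ell^\pi(t)$ and letting $t\to\infty$,
\[
\avT_\ell^\pi \;=\; \lim_{t\to\infty}\frac{g_{\ell N_\ell^\pi(t)}^\pi}{N_\ell^\pi(t)}.
\]
Combined with $\avR_\ell^\pi = \lim_t N_\ell^\pi(t)/t$, the claim $\avT_\ell^\pi\avR_\ell^\pi = 1$ reduces to showing $\lim_{t\to\infty} g_{\ell N_\ell^\pi(t)}^\pi / t = 1$.

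The upper bound $g_{\ell N_\ell^\pi(t)}^\pi \leq r_{\ell N_\ell^\pi(t)}^\pi \leq t$ is immediate from the definitions. For the matching lower bound, I would write
\[
t - g_{\ell N_\ell^\pi(t)}^\pi \;=\; \bigl(t - r_{\ell N_\ell^\pi(t)}^\pi\bigr) + \bigl(L_{\ell N_\ell^\pi(t)}^\pi - 1\bigr)
\]
and argue each summand is $o(t)$. The first piece (time since the last reception) is sub-linear because $\avR_\ell^\pi>0$ forbids any reception drought of order $t$ at a fixed $t$. The second piece (latency of the last received packet) is controlled by the feasibility structure on a per-class basis: for $\ell\in\cI_L$, Remark~\ref{remark:latency-tx-every-packet} together with Lemma~\ref{lemma:throughput-UEs}(2) forces $\avR_\ell^\pi = q_\ell$ and $\{T_{\ell i}^\pi\}$ to coincide with the raw inter-arrival times of the Bernoulli$(q_\ell)$ arrival process, so $\avT_\ell^\pi = 1/q_\ell$ by the strong law; for $\ell\in\cI_A$, the AoI identity $A_\ell(t) = t - g_{\ell N_\ell^\pi(t-1)}^\pi$ shows that a finite $\avA_\ell^\pi$ forces the Ces\`aro mean of $t - g_{\ell N_\ell^\pi(t)}^\pi$ to stay bounded, hence the ratio tends to zero; for $\ell\in\cI_R$, the assumption $q_k=1$ from Remark~\ref{remark:q=1} keeps a fresh packet always available at the BS.

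The main obstacle I anticipate is the case $\ell\in\cI_R$, where the latency is not directly constrained: one must rule out adversarial feasible policies that transmit stale packets even when fresh ones are available. The cleanest route is to exploit the global coupling in Lemma~\ref{lemma:throughput-UEs}(4), whose inequality balances throughput across the three UE classes and precludes $\Pi_F$ from containing such pathological policies, so that $L_{\ell N_\ell^\pi(t)}^\pi$ remains sub-linear in $t$ and the telescoping identity delivers $\avT_\ell^\pi = 1/\avR_\ell^\pi$.
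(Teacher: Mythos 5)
Your telescoping argument is the same underlying idea as the paper's proof, but you make explicit what the paper elides: the paper simply asserts $t = \sum_{i\in\cR_\ell^\pi(t)} T_{\ell i}^\pi$, whereas that sum telescopes only to $g_{\ell N_\ell^\pi(t)}^\pi$, and one must separately argue the residual $t - g_{\ell N_\ell^\pi(t)}^\pi$ is $o(t)$. Your split of that residual into a time-since-last-delivery piece and a last-packet-latency piece, and your per-class closings for $\cI_A$ (finite average AoI forces $A_\ell(t)/t\to 0$ because age grows by one per slot unless reset) and $\cI_L$ (finite average latency forces the queue, hence both pieces, to be $o(t)$), are sound and fill in a real gap in the paper's one-line proof.

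Your anticipated difficulty for $\ell\in\cI_R$ is, however, a genuine gap, and the proposed rescue via the global coupling in Lemma~\ref{lemma:throughput-UEs}(4) does not close it. Lemma~\ref{lemma:throughput-UEs}(4) is an inequality on aggregate attempt \emph{rates}; it says nothing about the \emph{order} in which the buffered packets of a given $k\in\cI_R$ are served. Take $k\in\cI_R$ with $q_k=1$ per Remark~\ref{remark:q=1} and suppose the policy serves $k$ first-in-first-out. This changes nothing about any throughput, AoI, or latency quantity appearing in the definition of $\Pi_F$, so the policy remains feasible. But under FIFO the $i$-th delivered packet is the one that arrived in slot $i$, so $g_{ki}^\pi=i$, $T_{ki}^\pi\equiv 1$, $\avT_k^\pi=1$, while $\avR_k^\pi<1$ in general, so $\avT_k^\pi\neq 1/\avR_k^\pi$. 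Equivalently, $L_{kN_k^\pi(t)}^\pi\approx(1-\avR_k^\pi)t$, so your residual is $\Theta(t)$, not $o(t)$. No amount of throughput bookkeeping rules this out; one needs a structural restriction on the per-UE service discipline (e.g.\ last-come-first-served / fresh-packet preference, which is what the proposed policies actually do).

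To be fair, the paper's proof has exactly the same hidden gap; it just never invokes Lemma~\ref{lemma:relation-throughput-cycle-length} for $\ell\in\cI_R$ (only for $\cI_A$, in the remark after \eqref{eq:compute-Tstar}, in Section~\ref{sec:numerical-results-piH}, and in Appendix~\ref{app:proof-lemma-throughput-piH}), so the defect stays dormant. The honest fix is to state the lemma for $\ell\in\cI_A\cup\cI_L$ only, or to add an explicit hypothesis that per-packet latency is $o(t)$, rather than to try to derive it from Lemma~\ref{lemma:throughput-UEs}(4).
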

\begin{proof}
	See Appendix \ref{app:proof-lemma-relation-throughput-cycle-length}.
\end{proof}

The average AoI of a UE can be expressed as follows.
\begin{lemma} \label{lemma:avAoI-general-expression}
	For any policy $\pi\in\Pi_F$ and UE $\ell\in\cI_A$, the average AoI $\avA_\ell^\pi=$
	\begin{align} \label{eq:avAoI-equiv}
		\frac{1}{2}\bigg[\avT_\ell^\pi+\frac{\avdelta}{\avT_\ell^\pi}+1\bigg]+\lim_{t\to\infty}\sum_{i\in\cR_\ell^\pi(t)}\frac{T_{\ell i}^\pi (L_{\ell i}^\pi-1)}{t} 
	\end{align}
	 with probability $1$, where 
	$$ \avdelta=\lim_{t\to\infty}\sum_{i\in\cR_\ell^\pi(t)}\frac{(\delta_{\ell i}^\pi)^2}{|\cR_\ell^\pi(t)|}=\lim_{t\to\infty}\sum_{i\in\cR_\ell^\pi(t)}\frac{(T_{\ell i}^\pi-\avT_\ell^\pi)^2}{|\cR_\ell^\pi(t)|}. 
	$$ 
\end{lemma}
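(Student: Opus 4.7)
The plan is to partition the time axis into intervals demarcated by the successive reception slots $r_{\ell(i-1)}^\pi < r_{\ell i}^\pi$, express the sum of $A_\ell(t)$ inside each interval in closed form using an arithmetic-series identity, and then divide by $t$ and invoke Lemma~\ref{lemma:relation-throughput-cycle-length} to take limits.

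First, I would use \eqref{eq:AoI} to observe that for every slot $t\in\{r_{\ell(i-1)}^\pi+1,\ldots,r_{\ell i}^\pi\}$ the latest received packet at UE $\ell$ is packet $i-1$, so $A_\ell(t)=t-g_{\ell(i-1)}^\pi$. Hence $A_\ell(t)$ traces an arithmetic progression starting at $L_{\ell(i-1)}^\pi$ (at $t=r_{\ell(i-1)}^\pi+1$) and ending at $T_{\ell i}^\pi+L_{\ell i}^\pi-1$ (at $t=r_{\ell i}^\pi$), with common difference $1$ across $T_{\ell i}^\pi+L_{\ell i}^\pi-L_{\ell(i-1)}^\pi$ terms. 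Summing the progression and simplifying, the per-interval area admits the compact form
\begin{align*}
Q_i \;:=\; \sum_{t=r_{\ell(i-1)}^\pi+1}^{r_{\ell i}^\pi} A_\ell(t) \;=\; \tfrac{T_{\ell i}^\pi(T_{\ell i}^\pi-1)}{2} + T_{\ell i}^\pi L_{\ell i}^\pi + U_i - U_{i-1},
\end{align*}
with $U_i:=L_{\ell i}^\pi(L_{\ell i}^\pi-1)/2$. The useful algebraic fact is that after expansion the dependence on $L_{\ell(i-1)}^\pi$ collapses into the telescoping increment $U_i-U_{i-1}$.

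Then I would sum $Q_i$ over $i\in\cR_\ell^\pi(t)$, divide by $t$, and let $t\to\infty$. Three inputs do the work: Lemma~\ref{lemma:relation-throughput-cycle-length} yields $|\cR_\ell^\pi(t)|/t\to 1/\avT_\ell^\pi$, and hence $\sum_i T_{\ell i}^\pi/t\to 1$; the identity $\overline{(T_{\ell i}^\pi)^2}=\avdelta+(\avT_\ell^\pi)^2$ (obtained by expanding $(T_{\ell i}^\pi-\avT_\ell^\pi)^2$ and taking a Ces\`aro average) gives $\lim_{t\to\infty}\sum_i(T_{\ell i}^\pi)^2/(2t)=\tfrac12\bigl[\avdelta/\avT_\ell^\pi+\avT_\ell^\pi\bigr]$; and the telescoping piece collapses to $(U_{N(t)}-U_0)/t$, which vanishes almost surely. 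Finally, rewriting $T_{\ell i}^\pi L_{\ell i}^\pi = T_{\ell i}^\pi(L_{\ell i}^\pi-1)+T_{\ell i}^\pi$ and using $\sum_i T_{\ell i}^\pi/t\to 1$ converts the constant $-1/2$ coming from $T_i(T_i-1)/2$ into the $+1/2$ appearing in the lemma, yielding \eqref{eq:avAoI-equiv}.

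The main obstacle I expect is a clean handling of the boundary terms: (i) the partial interval from $r_{\ell N(t)}^\pi$ to $t$ must be shown to contribute negligibly after dividing by $t$, and (ii) $U_{N(t)}/t\to 0$ a.s., i.e., the latency of the most recently received packet cannot grow on the order of $\sqrt{t}$. Since $\pi\in\Pi_F$ forces $\avA_\ell^\pi<\infty$ and since every $Q_i\ge 0$, any such growth of $L_{\ell N(t)}^\pi$ would make the $T_{\ell i}^\pi L_{\ell i}^\pi$ running sum diverge along a subsequence, contradicting feasibility; this is the only place where membership in $\Pi_F$ is essential, and combined with $q_\ell>0$ (which guarantees $N(t)\to\infty$ a.s., justifying the Ces\`aro limits above) it closes the argument.
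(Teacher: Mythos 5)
Your proof is correct and arrives at exactly the paper's result, but it takes a more self-contained route. The paper's proof does not re-derive the area-under-the-AoI-staircase formula; it simply cites Proposition~2 of \cite{kadota2019minimizing}, which asserts
\begin{align*}
\avA_\ell^\pi=\lim_{t\to\infty}\frac{1}{t}\sum_{i\in\cR_\ell^\pi(t)}\left(\frac{(T_{\ell i}^\pi)^2}{2}+T_{\ell i}^\pi (L_{\ell i}^\pi-1)\right)+\frac{1}{2},
\end{align*}
and then performs the variance decomposition $\sum_i (T_{\ell i}^\pi)^2=|\cR_\ell^\pi|(\avT_\ell^\pi)^2+\sum_i(\delta_{\ell i}^\pi)^2$ (the cross term vanishes because $\sum_i\delta_{\ell i}^\pi=0$) followed by Lemma~\ref{lemma:relation-throughput-cycle-length}. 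Your per-interval arithmetic-series identity $Q_i=\frac{T_{\ell i}^\pi(T_{\ell i}^\pi-1)}{2}+T_{\ell i}^\pi L_{\ell i}^\pi+U_i-U_{i-1}$, which I verified, is precisely a from-scratch derivation of that cited proposition; the subsequent rewriting $T_{\ell i}^\pi L_{\ell i}^\pi=T_{\ell i}^\pi(L_{\ell i}^\pi-1)+T_{\ell i}^\pi$ together with $\sum_i T_{\ell i}^\pi/t\to 1$ converts the $-T_{\ell i}^\pi/2$ into the $+1/2$, exactly as you say. Your second step (Ces\`aro expansion of $(T_{\ell i}^\pi)^2$) coincides with the paper's step. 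So what you gain is a proof that does not depend on the external reference, at the cost of having to justify the boundary terms explicitly.

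On that boundary term, your instinct that $\pi\in\Pi_F$ is what forces $U_{N(t)}/t\to 0$ is right, but the specific justification is a bit off: the fact that $L_{\ell N(t)}^\pi$ might be $\Theta(\sqrt{t})$ at a single index would add only an $O(\sqrt{t})$ term to the running sum $\sum_i T_{\ell i}^\pi L_{\ell i}^\pi$, which alone does not force divergence of that running sum divided by $t$. The cleaner argument uses the bounded-increment structure of AoI: since $A_\ell(t)-A_\ell(t-1)\le 1$ in every slot and $A_\ell(t)\ge L_{\ell N(t)}^\pi-1$, if $L_{\ell N(t_k)}^\pi\ge\epsilon\sqrt{t_k}$ along a subsequence then $A_\ell(\tau)\ge\epsilon\sqrt{t_k}/2$ for all $\tau$ in a window of length $\epsilon\sqrt{t_k}/2$ preceding $t_k$, so $S(t_k)-S(t_k-\epsilon\sqrt{t_k}/2)\ge\epsilon^2 t_k/4$ for $S(t)=\sum_{\tau\le t}A_\ell(\tau)$; this is incompatible with the convergence $S(t)/t\to\avA_\ell^\pi<\infty$. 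With this fix your argument is complete and genuinely different in flavour from the paper's, which outsources this step entirely to the cited proposition.
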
 
\begin{proof}
	See Appendix \ref{app:proof-lemma-avAoI-general-expression}.
\end{proof}

Substituting \eqref{eq:avAoI-equiv} and \eqref{eq:avLatency} into \eqref{eq:objective}, and rearranging terms, we get the following equivalent formulation for problem \eqref{eq:prob-form-1}:
\begin{align} \label{eq:objective-equiv}
	\min_{\pi\in\Pi_F} \ \ & F_1^\pi+F_2^\pi, 
\end{align}
where
\begin{align} \label{eq:objective-equiv-term1}
F_1^\pi&=\sum_{\ell\in\cI_A}\frac{\rho_\ell}{2}\bigg[\avT_\ell^\pi+\frac{\avdelta}{\avT_\ell^\pi}+1\bigg], \\ 
\label{eq:objective-equiv-term2}
F_2^\pi&=\lim_{t\to\infty} \bigg[\sum_{\stackrel{\ell\in\cI_A}{i\in\cR_\ell^\pi(t)}}\frac{(\rho_\ell T_{\ell i}^\pi) (L_{\ell i}^\pi-1)}{t}+\sum_{\stackrel{\ell\in\cI_L}{i\in\cG_\ell(t)}} \frac{(\rho_\ell/q_\ell)L_{\ell i}^\pi}{t}\bigg], 
\end{align}
and the throughput constraint \eqref{eq:constraint} is accounted in the fact that the optimization is only over policies $\pi\in\Pi_F$ (that by definition, satisfy \eqref{eq:constraint}).

\begin{lemma} \label{lemma:avdelta-lb}
	Under an optimal policy $\pi\in\Pi_F$ (policy that minimizes \eqref{eq:objective-equiv}), $\avdelta > (1-q_\ell)/2q_\ell^2$ for each UE $\ell\in\cI_A$. Therefore, from \eqref{eq:objective-equiv-term1}, we get
	\begin{align} \label{eq:F1-lb}
		F_1^\pi> \sum_{\ell\in\cI_A}\frac{\rho_\ell}{2}\bigg[\avT_\ell^\pi+\frac{(1-q_\ell)/2q_\ell^2}{\avT_\ell^\pi}+1\bigg].
	\end{align}
\end{lemma}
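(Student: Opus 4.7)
The plan is to exploit the Bernoulli$(q_\ell)$ arrival structure at the BS to decompose $T_{\ell i}^\pi$ into sums of i.i.d.\ geometric inter-arrival times, then apply the law of total variance. Specifically, let $X_1,X_2,\dots$ be the i.i.d.\ Geometric$(q_\ell)$ inter-arrival times of the BS-arrival process, with mean $1/q_\ell$ and variance $(1-q_\ell)/q_\ell^2$. Every packet received by UE $\ell$ is one of these BS-arrivals, so one can write $T_{\ell i}^\pi = \sum_{j\in S_i} X_j$, where $S_i$ (of random size $N_i\ge 1$) is the set of geometric gaps separating the $(i{-}1)^{\text{th}}$ and $i^{\text{th}}$ received packets. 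Because future arrivals are independent of all past causal decisions, the gaps in $S_i$ are i.i.d.\ Geometric$(q_\ell)$ regardless of how the policy has chosen $N_i$.

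From this decomposition, the law of total variance yields
\[
\operatorname{Var}(T_{\ell i}^\pi) = E[N_i]\cdot\frac{1-q_\ell}{q_\ell^2} + \operatorname{Var}(N_i)\cdot\frac{1}{q_\ell^2} \;\ge\; \frac{1-q_\ell}{q_\ell^2},
\]
using $N_i\ge 1$. This lower bound is exactly twice the one claimed in the lemma, so the strict inequality $\avdelta > (1-q_\ell)/(2q_\ell^2)$ is immediate for $q_\ell<1$ once the ensemble-variance bound is transferred to the a.s.\ time average $\avdelta$. Inequality \eqref{eq:F1-lb} then follows by substituting into \eqref{eq:objective-equiv-term1}, together with $\avT_\ell^\pi = 1/\avR_\ell^\pi$ from Lemma \ref{lemma:relation-throughput-cycle-length}. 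Note that ``optimality'' of $\pi$ is used only implicitly, to rule out degenerate cases (for instance, to ensure $\avR_\ell^\pi>0$ so that the normalization $\avT_\ell^\pi$ is finite).

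The main obstacle is the transfer step, since under an arbitrary causal policy the $N_i$'s are neither i.i.d.\ nor a priori stationary. I would handle it algebraically by expanding
\[
\sum_i (T_{\ell i}^\pi)^2 \;=\; \sum_j X_j^2 \;+\; 2\sum_i \sum_{\substack{j<k\\ j,k\in S_i}} X_j X_k,
\]
and applying the SLLN term by term. The first sum divided by $t$ converges a.s.\ to $q_\ell\, E[X^2] = (2-q_\ell)/q_\ell$, since the number of BS-arrivals up to slot $t$ is $\approx q_\ell t$. For the cross-term, I would condition on the (policy-measurable) partition $\{S_i\}$ and use the independence of the geometric gaps from this conditioning to obtain convergence to the conditional mean via a martingale-difference LLN. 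Dividing by $|\cR_\ell^\pi(t)|$ (which grows like $\avR_\ell^\pi t$) and subtracting $(\avT_\ell^\pi)^2$ then yields $\avdelta \ge (1-q_\ell)/q_\ell^2$ almost surely, and the strict gap from $(1-q_\ell)/(2q_\ell^2)$ completes the proof.
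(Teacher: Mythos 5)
Your Wald-type decomposition is a genuinely different route from the paper's, but it has a fatal gap: $N_i$ is \emph{not} a stopping time with respect to the filtration generated by the geometric gaps $X_1, X_2, \ldots$, so the Blackwell--Girshick/Wald variance identity you invoke does not apply, and the intermediate claim $\operatorname{Var}(T_{\ell i}^\pi) \ge (1-q_\ell)/q_\ell^2$ is in fact false. Under an optimal policy the BS transmits the \emph{latest} packet, so $g_{\ell i}^\pi$ is the last arrival at or before the (causally chosen) transmission slot $r_{\ell i}^\pi$; the event $\{N_i = n\}$ therefore requires that the $(n{+}1)$-th arrival has \emph{not yet} occurred by $r_{\ell i}^\pi$, and hence depends on $X_{n+1}$, which is exactly what the stopping-time hypothesis forbids. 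Concretely, take $q_\ell = 1/2$, assume transmissions always succeed, and consider the causal policy that at slot $g_{\ell(i-1)}^\pi + 4$ transmits the latest available packet (waiting for the first arrival if none has arrived). A direct calculation gives $\bbE[T_{\ell i}^\pi] = 3.4375$ and $\bbE[(T_{\ell i}^\pi)^2] = 13.1875$, hence $\operatorname{Var}(T_{\ell i}^\pi) \approx 1.37$, strictly below $(1-q_\ell)/q_\ell^2 = 2$; meanwhile the Wald formula with $\bbE[N_i] \approx 2.06$ and $\operatorname{Var}(N_i) \approx 0.81$ would wrongly predict $\approx 7.36$. So the ``extra'' factor of two you obtained over the lemma is not slack --- it is an artifact of a misapplied identity.

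The paper avoids the renewal decomposition altogether. Its argument is that, because the policy is causal and transmits the latest arrival, $g_{\ell i}^\pi$ must itself be an arrival slot; so if no packet arrives in the symmetric window of half-width $\epsilon_{\ell i}$ centred at $g_{\ell(i-1)}^\pi + \avT_\ell^\pi$, then $g_{\ell i}^\pi$ lands outside that window and $(\delta_{\ell i}^\pi)^2 > \epsilon_{\ell i}^2$. The $\epsilon_{\ell i}$'s are i.i.d.\ with $\bbP(\epsilon_{\ell i}^2 > x) \ge (1-q_\ell)^{2\lfloor\sqrt{x}\rfloor + 1}$, and summing the resulting arithmetico-geometric series yields $\bbE[\epsilon_{\ell i}^2] > (1-q_\ell)/2q_\ell^2$; the SLLN then transfers this ensemble bound to the time average $\avdelta$. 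The factor of $2$ in the denominator comes from the two-sidedness of the empty window and, as the counterexample above shows, cannot be removed.
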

\begin{proof}
	See Appendix \ref{app:proof-avdelta-lb}.
\end{proof}

Next, we propose a causal policy as a candidate solution to the optimization problem \eqref{eq:objective-equiv}.

\subsection{Algorithm} \label{sec:algorithm}

\begin{definition} \label{def:notations-algo}
	In all the algorithms (pseudocodes) in this paper,
	$\hat{R}_\ell(t)$ and $|\cR_\ell(t)|$ respectively denotes the number of transmission attempts, and the number of successful transmissions to UE $\ell$, until slot $t$. Also, $\lambda_\ell(t)=\max_{\{i|r_{\ell i}\le t-1\}} g_{\ell i}$ denotes the arrival time of the latest packet at the BS that has been successfully transmitted to UE $\ell$ by the end of slot $t-1$.
\end{definition}

Consider the hierarchical index policy $\pi_H$ (Algorithm \ref{algo:policy}) that primarily follows two steps.
\begin{algorithm}
	\caption{Hierarchical Index Policy $\pi_H$.}
	\label{algo:policy}
	\begin{algorithmic} [1]
		\STATE Initialize variables $a_\ell=0$, $b_\ell=0$ ($\forall \ell\in\cI_A$), and $W_\ell=0$ ($\forall \ell\in\cI_A\cup\cI_L$); 
		\STATE Compute $\avT_\ell^\star$ \eqref{eq:compute-Tstar};
		\FOR{each slot $t\in\{1,2,3,\cdots\}$} 
		\STATE \textbf{Update\_Index\_UEs\_$\mathbf{\cI_A\cup\cI_L}$} (Codeblock \ref{algo:policy-Packet-Prioritization}); 		
		\STATE \textbf{Transmit\_Packet} (Codeblock \ref{algo:policy-Packet-Transmission});
		\ENDFOR
	\end{algorithmic}
\end{algorithm}

\paragraph{Index Assignment to UEs $\ell\in\cI_A\cup\cI_L$} 
$\pi_H$ maintains a set $\cS$ of high priority untransmitted packets for UEs $\ell\in\cI_A\cup\cI_L$, and each of these packets is assigned an index at arrival. In any slot, for UE $\ell$, the index $W_\ell\in\cI_A\cup\cI_L$ is defined to be equal to the index of its latest packet in $\cS$. If $\cS$ does not have any packet for UE $\ell$, then $W_\ell=0$.

\begin{figure}[htbp]
	\centerline{\includegraphics[scale=0.4]{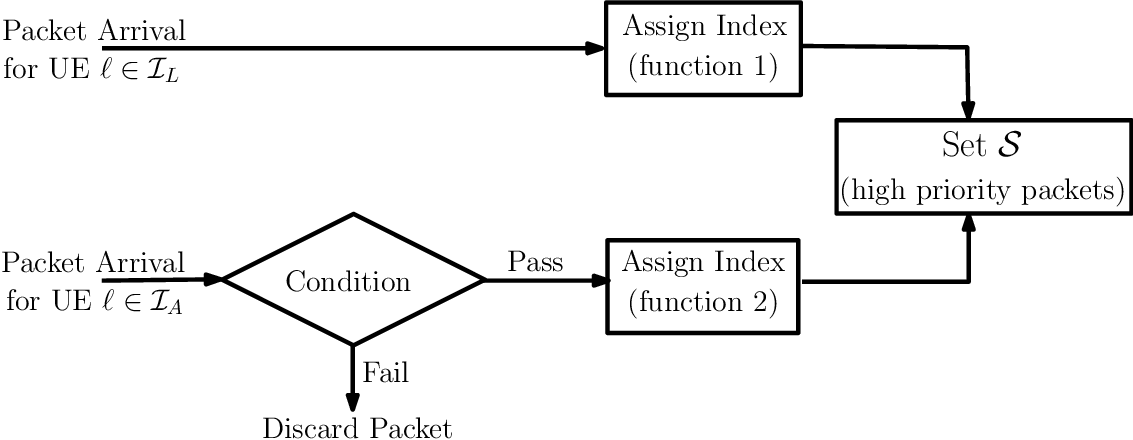}}
	\caption{[Flowchart] $\pi_H$ selects high priority packets.}
	\label{fig:pkt_selection}
\end{figure}

The procedure for adding packets to set $\cS$ (Figure \ref{fig:pkt_selection}) is as follows. In any slot $t$, if a packet arrives for UE $\ell\in\cI_L$, then the packet is immediately added to $\cS$, and assigned an index $\rho_\ell p_\ell/q_\ell$ (because all packets for UE $\ell\in\cI_L$ are equally important, and must be transmitted; Remark \ref{remark:latency-tx-every-packet}). If a packet arrives for UE $\ell\in\cI_A$ (in slot $t$), then the packet is added to $\cS$ if the throughput for UE $\ell$ until slot $t$ is less than $1/\avT_\ell^\star$, where $\avT_\ell^\star$ is obtained by solving the following convex optimization problem \eqref{eq:compute-Tstar}, 
\begin{subequations} \label{eq:compute-Tstar}
\begin{align} \label{eq:threshold-compute}
	&\underset{\avT_\ell,\forall \ell\in\cI_A}{\arg\min} \ \  \sum_{\ell\in\cI_A} \frac{\rho_\ell}{2}\left(\avT_\ell+\frac{(1-q_\ell)/q_\ell^2}{\avT_\ell}\right),  \\	\label{eq:avT-constraint}
	&\text{ s.t.} \ \ \sum_{\ell\in\cI_A}\frac{1}{p_\ell\avT_\ell}\le \zeta, \textbf{ and } \avT_\ell\ge 1 \ \ (\forall \ell\in\cI_A), 
\end{align}
\end{subequations}
where $\zeta=1-\sum_{j\in\cI_L}q_j/p_j-\sum_{k\in\cI_R}\alpha_k/p_k$ is a constant. 

\begin{remark}
	The function being minimized in \eqref{eq:threshold-compute} (with some additive constants) is an upper bound on $F_1^H$ \eqref{eq:objective-equiv-term1} (where superscript $H$ denotes policy $\pi_H$) in terms of average inter-arrival time of transmitted packets $\avT_\ell$'s.
	Since $\avT_\ell=1/\avR_\ell$ (Lemma \ref{lemma:relation-throughput-cycle-length}), problem \eqref{eq:compute-Tstar} can be interpreted as minimizing the upper bound on $F_1^H$ with respect to $\avT_\ell$'s, under constraint $\sum_{\ell\in\cI_A}\avR_\ell/p_\ell \le 1-\sum_{j\in\cI_L}q_j/p_j-\sum_{k\in\cI_R}\alpha_k/p_k$ derived in Lemma \ref{lemma:throughput-UEs}. $\avT_\ell\ge 1$ because inter-arrival time of packets (at the BS) for any UE is at least one slot. 
\end{remark}

For technical reasons (discussed later), for adding packets of UE $\ell\in\cI_A$ to set $\cS$, instead of directly comparing the throughput of UE $\ell$ with $1/\avT_\ell^\star$, $\pi_H$ uses a threshold-based counter $b_\ell$. Let $a_\ell(t)$ denote the latest slot (until slot $t-1$) in which $b_\ell$ got incremented. 
In slot $t$, if a packet arrives for UE $\ell$, and $t-a_\ell(t)>\lceil \avT_\ell^\star-1/q_\ell\rceil$, then $b_\ell$ is incremented by $1$. Further, in any slot $t$, if $b_\ell>|\cR_\ell(t)|$ (the number of packets successfully transmitted to UE $\ell$ until slot $t$), and a packet $i$ arrives for UE $\ell$, then the packet is added to set $\cS$, and assigned an index $\rho_\ell T_{\ell i} p_\ell=\rho_\ell(t-\lambda_\ell(t))p_\ell$. 
For UE $\ell\in\cI_A$, since AoI only depends on the latest transmitted packet, and the reduction in AoI is largest on transmission of latest packet, whenever a new packet for UE $\ell$ is added to set $\cS$, all its previous packets in $\cS$ are discarded. 

\begin{algorithm}
	\floatname{algorithm}{Codeblock}
	\caption{\textbf{Update\_Index\_UEs\_$\mathbf{\cI_A\cup\cI_L}$}}
	\label{algo:policy-Packet-Prioritization}
	\begin{algorithmic} [1]
		\IF{a packet arrives for UE $\ell\in\cI_A\cup\cI_L$}
		\IF{$\ell\in\cI_L$}
		\STATE $W_\ell=\rho_\ell p_\ell/q_\ell$;
		\ELSE 
		\IF{$t-a_\ell\ge \lceil \avT_\ell^\star-1/q_\ell\rceil$}
		\STATE $a_\ell=t$, $b_\ell=b_\ell+1$;
		\ENDIF
		\IF{$b_\ell>|\cR_\ell(t-1)|$}
		\STATE $W_\ell=\rho_\ell p_\ell (t-\lambda_\ell(t))$;
		\ENDIF
		\ENDIF
		\ENDIF
	\end{algorithmic}
\end{algorithm}

\begin{figure}[htbp]
	\centerline{\includegraphics[scale=0.5]{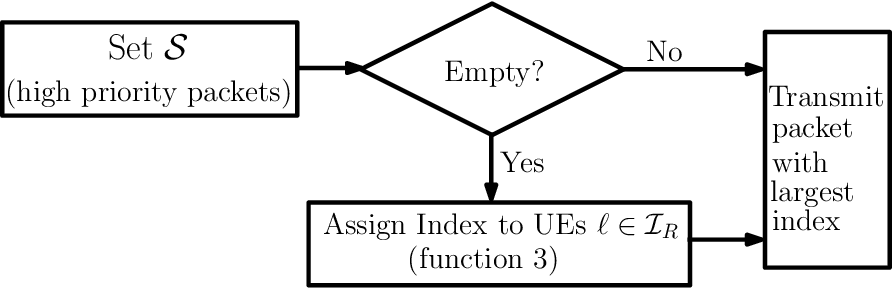}}
	\caption{[Flowchart] $\pi_H$ selects packet to transmit.}
	\label{fig:pkt_transmission}
\end{figure}
\paragraph{Packet transmission}

As shown in Figure \ref{fig:pkt_transmission}, in each slot $t$, if set $\cS$ is non-empty ($W_\ell>0$ for some UE $\ell\in\cI_A\cup\cI_L$), then $\pi_H$ transmits the highest index packet from $\cS$ (i.e., the latest packet of UE $\arg\max_{\ell\in\cI_A\cup\cI_L}W_\ell$), and removes the packet from $\cS$ if the transmission is successful. 
If $\cS$ is empty (i.e. $W_\ell=0$, $\forall \ell\in\cI_A\cup\cI_L$), then to each UE $k\in\cI_R$, $\pi_H$ assigns index $(\alpha_k t/p_k)-\hat{R}_k(t)$ (where $\hat{R}_k(t)$ denotes the number of slots until slot $t$, in which BS makes a transmission attempt to UE $k$), and transmits packet of UE $k^\star\in\cI_R$ for which the index is maximum.

\begin{remark} 
	Under $\pi_H$, indices $W_\ell$ of UEs in $\ell\in\cI_A\cup\cI_L$ are never directly compared to the indices of UEs in $\cI_R$. There is a clear hierarchy in the priority of the two sets of UEs. Therefore, we call $\pi_H$, the `\emph{hierarchical index policy}'. 
\end{remark}

\begin{algorithm}
	\floatname{algorithm}{Codeblock}
	\caption{\textbf{Transmit\_Packet}}
	\label{algo:policy-Packet-Transmission}
	\begin{algorithmic} [1]
		\STATE $\ell^\star=\arg\max_{\ell\in\cI_A\cup\cI_L} W_\ell$;
		\IF {$W_{\ell^\star}>0$}
		\STATE transmit the latest packet of UE $\ell^\star$;
		\IF{the transmission is successful}
		\IF{$\ell^\star\in\cI_A$}
		\STATE $W_{\ell^\star}=0$; 
		\ELSIF{no remaining packet for UE $\ell^\star\in\cI_L$}
		\STATE $W_{\ell^\star}=0$;
		\ENDIF
		\ENDIF
		\ELSE
		\STATE transmit packet of UE $k^\star=\underset{k\in\cI_R}{\arg\max}\left\{\frac{\alpha_k\cdot t}{p_k}-\hat{R}_k(t)\right\}$;
		\ENDIF
	\end{algorithmic}
\end{algorithm}

The basic idea behind policy $\pi_H$ is as follows. As shown in Lemma \ref{lemma:throughput-UEs}, for problem \eqref{eq:objective-equiv}, each UE $\ell\in\cI_L\cup\cI_R$ needs to have a certain throughput, that requires the BS to transmit to each of these UEs in certain proportion of the total number of slots. Therefore, for UEs $\ell\in\cI_A$, a policy must choose a subset $\cS_A$ of arriving packets that can be transmitted in the remaining slots (i.e., the throughput of UEs in $\cI_A$ satisfy Relation $4$ of Lemma \ref{lemma:throughput-UEs}). $\pi_H$ chooses the subset $\cS_A$ (and adds it to set $\cS$) using a threshold-based counter $b_\ell$, which ensures that the packets are evenly spread out in time such that with the limited number of transmissions, maximum reduction in average AoI can be achieved. 
The rationale for the choice of threshold $\lceil\avT_\ell^\star-1/q_\ell\rceil$ (for incrementing $b_\ell$) is discussed next. 

\begin{lemma} \label{lemma:b-increment-packet-addition}
	In any slot in which $b_\ell$ is incremented, a new packet for UE $\ell$ is added to set $\cS$. 
\end{lemma}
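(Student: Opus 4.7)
The plan is to prove the lemma by establishing the invariant $b_\ell(t) \geq |\cR_\ell(t)|$ at the end of every slot $t$, where $b_\ell(t)$ denotes the value of the counter after all updates in slot $t$. Once this invariant is in hand, the lemma is immediate: whenever $b_\ell$ is incremented in slot $t$, its post-increment value is $b_\ell(t-1) + 1$, and by the invariant at $t-1$ this equals at least $|\cR_\ell(t-1)| + 1 > |\cR_\ell(t-1)|$. This is exactly the condition on line 8 of Codeblock \ref{algo:policy-Packet-Prioritization} that triggers the assignment $W_\ell = \rho_\ell p_\ell (t - \lambda_\ell(t))$ and thus adds the arriving packet to $\cS$.

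I would prove the invariant by induction on $t$. The base case $t = 0$ is trivial since both $b_\ell$ and $|\cR_\ell|$ are initialized to zero. In the inductive step, the only non-trivial situation is $|\cR_\ell(t)| = |\cR_\ell(t-1)| + 1$, i.e., a successful transmission to UE $\ell$ occurs in slot $t$; in all other cases monotonicity of $b_\ell$ preserves the invariant directly. Such a transmission requires $W_\ell > 0$ when Codeblock \ref{algo:policy-Packet-Transmission} runs, meaning a packet for UE $\ell$ sits in $\cS$ at that moment.

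The core step is to trace this packet back to the slot $s \leq t$ in which it was added to $\cS$. The algorithm made that addition only because the check $b_\ell > |\cR_\ell(s-1)|$ succeeded in Codeblock \ref{algo:policy-Packet-Prioritization}, so immediately after the addition the counter satisfied $b_\ell \geq |\cR_\ell(s-1)| + 1$. Because the packet remained in $\cS$ continuously from slot $s$ to its successful transmission in slot $t$, no other successful transmission to UE $\ell$ and no replacing addition could have occurred in between (the former would have removed this packet, the latter would contradict the minimality of $s$). Hence $|\cR_\ell(t-1)| = |\cR_\ell(s-1)|$, and combined with the monotonicity of $b_\ell$ this gives $b_\ell(t-1) \geq |\cR_\ell(t-1)| + 1$. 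Therefore $b_\ell(t) \geq b_\ell(t-1) \geq |\cR_\ell(t-1)| + 1 = |\cR_\ell(t)|$, and the invariant is preserved.

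The main obstacle I anticipate is careful bookkeeping around the sequential order of Codeblocks \ref{algo:policy-Packet-Prioritization} and \ref{algo:policy-Packet-Transmission} within a single slot, in particular the edge case $s = t$ in which the same slot witnesses both the addition (and possibly the $b_\ell$ increment) and the successful transmission of the packet. Once that ordering is pinned down and the addition-time condition is applied to the most recent packet residing in $\cS$, the induction closes and the lemma follows mechanically from the invariant.
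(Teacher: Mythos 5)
Your proof is correct and takes essentially the same approach as the paper: both rest on the invariant $b_\ell(t) \ge |\cR_\ell(t)|$ --- which the paper states as ``$|\cR_\ell(t)|$ can never be greater than $b_\ell$'' --- and then conclude that right after $b_\ell$ is incremented, the check $b_\ell > |\cR_\ell(t-1)|$ on line 8 of Codeblock \ref{algo:policy-Packet-Prioritization} must pass, so the arriving packet is added to $\cS$. Your write-up makes the induction explicit (and correctly flags the $s=t$ ordering edge case) where the paper sketches the invariant in a single sentence, but the underlying argument is identical.
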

\begin{proof}
	Note that $|\cR_\ell(t)|$ can never be greater than $b_\ell$ (because for UE $\ell\in\cI_A$, only packets in set $\cS$ may be transmitted, and for UE $\ell\in\cI_A$, packets are added to set $\cS$ only when $b_\ell>|\cR_\ell(t)|$). Hence, in the slot in which $b_\ell$ is incremented, $b_\ell>|\cR_\ell(t)|$. Also, $b_\ell$ is incremented only if a new packet arrives in the slot for UE $\ell$. Thus, when $b_\ell$ is incremented, a packet for UE $\ell$ is added to set $\cS$, in the same slot.
\end{proof}

As shown in Lemma \ref{lemma:b-increment-packet-addition}, when $b_\ell$ is incremented, a new packet for UE $\ell$ is also added to set $\cS$.
Since $\cS$ is the set of high priority packets, it is expected that the packet will soon be transmitted (i.e., most likely the packet will not be replaced by newer packet). Thus, assuming the sequence of packets for UE $\ell\in\cI_A$ that are added to set $\cS$ immediately following increment of $b_\ell$, are all transmitted, if the threshold $\lceil\avT_\ell^\star-1/q_\ell\rceil$ (for incrementing $b_\ell$) is used, $\pi_H$ will have the following key properties. 
\begin{remark}
	In all the notations in this paper, superscript $H$ represents policy $\pi_H$.
\end{remark}
\begin{theorem} \label{thm:F1H-ub}
	For policy $\pi_H$, $\avT_\ell^H\in[\avT_\ell^\star, \avT_\ell^\star+1)$, and  
		$$\overline{(\delta_\ell^H)^2}=\lim_{t\to\infty}\sum_{i\in\cR_\ell^H(t)}\frac{(T_{\ell i}^H-\avT_\ell^H)^2}{|\cR_\ell^H(t)|}= \frac{(1-q_\ell)}{q_\ell^2},$$
		$\forall \ell\in\cI_A$. 
		Thus, using \eqref{eq:objective-equiv-term1}, we get 
		\begin{align} \label{eq:F1H-ub}
			F_1^H &< \sum_{\ell\in\cI_A}\frac{\rho_\ell}{2}\bigg[(\avT_\ell^\star+1)+\frac{(1-q_\ell)/q_\ell^2}{\avT_\ell^\star}+1\bigg], \\
			\label{eq:F1H-guarantee}
			&< 2\cdot\underset{\pi\in\Pi_F}{\min} F_1^\pi.
		\end{align}
	\end{theorem}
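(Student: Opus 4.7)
The plan is to verify the three assertions in order, using the observation that, under the stated assumption, the successive increments of the counter $b_\ell$ form a renewal process whose gap distribution can be written down in closed form.

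First, since the assumption says that every packet added to $\cS$ on an increment of $b_\ell$ is eventually transmitted, the arrival time of the $i$-th element of $\cR_\ell^H$ coincides with the slot of the $i$-th increment of $b_\ell$, and hence $T_{\ell i}^H$ is exactly the gap between consecutive $b_\ell$-increments. Writing $\tau := \lceil \avT_\ell^\star - 1/q_\ell \rceil$, the eligibility test $t - a_\ell > \tau$ rules out slots $a_\ell+1,\ldots,a_\ell+\tau$, after which the next increment fires at the first Bernoulli-$q_\ell$ arrival. Consecutive gaps are therefore i.i.d.\ copies of $\tau + M$ with $M \sim \mathrm{Geom}(q_\ell)$, so by the SLLN, $\avT_\ell^H = \tau + 1/q_\ell$ and $\overline{(\delta_\ell^H)^2} = \mathrm{Var}(M) = (1-q_\ell)/q_\ell^2$. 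Since $\tau \in [\avT_\ell^\star - 1/q_\ell,\, \avT_\ell^\star - 1/q_\ell + 1)$, this gives $\avT_\ell^H \in [\avT_\ell^\star,\, \avT_\ell^\star + 1)$, establishing the two structural claims.

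Substituting these values into \eqref{eq:objective-equiv-term1} and bounding the linear term by $\avT_\ell^\star + 1$ and the reciprocal $1/\avT_\ell^H$ by $1/\avT_\ell^\star$ separately yields \eqref{eq:F1H-ub} directly.

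The main difficulty is then \eqref{eq:F1H-guarantee}, because the upper bound on $F_1^H$ sits at $\avT_\ell^\star$ while Lemma~\ref{lemma:avdelta-lb} applied to an optimizer $\pi^\star \in \Pi_F$ gives a lower bound at $\avT_\ell^{\pi^\star}$, and these need not coincide. The bridge is that, by Lemma~\ref{lemma:throughput-UEs}, $\{1/\avT_\ell^{\pi^\star}\}_{\ell \in \cI_A}$ satisfies the constraint of \eqref{eq:compute-Tstar}, so the optimality of $\{\avT_\ell^\star\}$ for that problem yields $\sum_{\ell \in \cI_A} (\rho_\ell/2)[\avT_\ell^\star + (1-q_\ell)/(q_\ell^2 \avT_\ell^\star)] \le \sum_{\ell \in \cI_A} (\rho_\ell/2)[\avT_\ell^{\pi^\star} + (1-q_\ell)/(q_\ell^2 \avT_\ell^{\pi^\star})]$. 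I would then double the Lemma~\ref{lemma:avdelta-lb} inequality for $F_1^{\pi^\star}$, rewrite $2\cdot(1-q_\ell)/(2q_\ell^2)$ as $(1-q_\ell)/q_\ell^2$, split $2\avT_\ell^{\pi^\star}$ as $\avT_\ell^{\pi^\star} + \avT_\ell^{\pi^\star}$, absorb one copy into the above optimality inequality, and bound the other by $1$ using $\avT_\ell^{\pi^\star} \ge 1$. This gives $2 F_1^{\pi^\star} > \sum_{\ell \in \cI_A} (\rho_\ell/2)[\avT_\ell^\star + (1-q_\ell)/(q_\ell^2 \avT_\ell^\star) + 3]$, which strictly exceeds the right-hand side of \eqref{eq:F1H-ub} ($= \sum (\rho_\ell/2)[\avT_\ell^\star + (1-q_\ell)/(q_\ell^2 \avT_\ell^\star) + 2]$) and hence $F_1^H$. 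The factor $2$ just barely suffices here: it must simultaneously absorb the coefficient gap between the lower-bound variance $(1-q_\ell)/(2q_\ell^2)$ and the actual renewal variance $(1-q_\ell)/q_\ell^2$, and the mismatch between the two optimizers $\avT^\star$ and $\avT^{\pi^\star}$, leaving only one unit of additive slack.
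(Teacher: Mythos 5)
Your proposal is correct and follows essentially the same route as the paper: the structural claims come from the same renewal/i.i.d.\ argument ($T_{\ell i}^H = \tau + M$ with $\tau=\lceil\avT_\ell^\star-1/q_\ell\rceil$ and $M\sim\mathrm{Geom}(q_\ell)$), and \eqref{eq:F1H-guarantee} is obtained from the same three ingredients — optimality of $\{\avT_\ell^\star\}$ for \eqref{eq:compute-Tstar}, the lower bound on $F_1^\pi$ from Lemma~\ref{lemma:avdelta-lb}, and $\avT_\ell^\pi\ge 1$. You merely carry out explicitly the pointwise comparison at an optimizer $\pi^\star$ that the paper compresses into its chain $(a)$--$(c)$, which usefully exposes the one unit of additive slack the argument has left.
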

	\begin{proof}
		See Appendix \ref{app:proof-thm-F1H-ub}.
	\end{proof}

Theorem \ref{thm:F1H-ub} shows that when the threshold $\lceil\avT_\ell^\star-1/q_\ell\rceil$ is used, $F_1^H$ \eqref{eq:objective-equiv-term1} is close to optimal. Therefore, for minimizing the objective \eqref{eq:objective-equiv}, it only remains to minimize $F_2^H$ \eqref{eq:objective-equiv-term2}, that $\pi_H$ does in the packet transmission step.

In `packet transmission' step, packets in $\cS$ are given priority over all UEs $k\in\cI_R$. This is because unlike the packets for UEs $k\in\cI_R$, the packets in $\cS$ are time-sensitive (longer the packets remain at the UEs, larger is the cost).
Note that $\pi_H$ giving more priority to packets in $\cS$ (i.e. packets for UEs $\cI_A\cup\cI_L$) does not ignore UEs $k\in\cI_R$. This is because the threshold-based counter $b_\ell$ restricts the rate at which packets are added to set $\cS$, thus ensuring that sufficient slots are available for UEs $k\in\cI_R$ to transmit, and satisfy the constraint \eqref{eq:constraint} (Theorem \ref{thm:throughput-constraint-satisfied-piH}).

Next, we discuss the rationale for the choice of indices assigned to the packets in $\cS$ and UEs in $\cI_R$, and how they help in minimizing $F_2^H$ \eqref{eq:objective-equiv-term2}. Note that $F_2^\pi$ \eqref{eq:objective-equiv-term2} is a weighted average of the sum of latency of packets that the BS transmits to UEs $\ell\in\cI_A\cup\cI_L$ (includes all packets that arrive for UEs $\ell\in\cI_L$). From prior works \cite{buyukkoc1985cmu,krishnasamy2018learning}, it is well-known that for minimizing weighted sum of latency of packets per time slot, the $c\mu$-rule, i.e., in each slot, transmit the packet for which the product of the weight and the transmission success probability is maximum, is optimal. Therefore, for each packet in $\cS$, $\pi_H$ assigns an index that is equal to the product of the weight of the packet and the corresponding transmission success probability, and in each slot, transmits the packet with the largest index.
\begin{remark}
	Note that in set $\cS$, the weight $\rho_\ell T_{\ell i}$ (index $\rho_\ell T_{\ell i}p_\ell$) of packet $i$ for UE $\ell\in\cI_A$ changes if the packet is replaced by a newly arrived packet.  
	Therefore, the considered setting is different from the setting for which $c\mu$-rule is known to be optimal. However, $c\mu$-rule is still a promising candidate, and showing it optimal/near-optimal for the considered setting remains a part of our ongoing work. 
\end{remark}

Further, in \cite{hou2009theory}, it has been shown that when there are multiple UEs, a policy that in each slot, transmits to UE $k$ for which $(\alpha_k t/p_k)-\hat{R}_k(t)$ (i.e., the expected number of transmission attempts $\alpha_k t/p_k$ needed to satisfy the throughput constraint, and the difference between the actual number of transmission attempts $\hat{R}_k(t)$) is maximum, satisfies the throughput constraint of all the UEs (when the constraints are feasible; Theorem \ref{thm:feasibility}). Therefore, $\pi_H$ assigns index $Y_k=(\alpha_k t/p_k)-\hat{R}_k(t)$ to UEs $k\in\cI_R$, and whenever set $\cS$ is empty, transmits the packet for UE $k^\star=\arg\max_k Y_k$.

\begin{theorem} \label{thm:throughput-constraint-satisfied-piH}
	$\pi_H$ satisfies the throughput constraint \eqref{eq:constraint}, for all UEs $k\in\cI_R$.
\end{theorem}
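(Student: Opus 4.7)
The plan is to bound, slot by slot, how many transmission attempts $\pi_H$ devotes to each class of UEs, and then invoke a deficit-index/Lyapunov-drift argument on the slots that are left over for $\cI_R$. By Remark \ref{remark:q=1} the BS always has a backlog packet for every $k\in\cI_R$, and Codeblock \ref{algo:policy-Packet-Transmission} then forces $\pi_H$ to attempt exactly one transmission in every slot: an $\cI_A\cup\cI_L$ packet when $\cS$ is non-empty, and an $\cI_R$ packet otherwise. Writing $\hat{R}_\ell(t)$ for the number of attempts to UE $\ell$ up to slot $t$ and $N_X(t)=\sum_{\ell\in X}\hat{R}_\ell(t)$ for $X\in\{\cI_A,\cI_L,\cI_R\}$, this observation gives $N_A(t)+N_L(t)+N_R(t)=t$.

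The first step is to upper-bound $N_A(t)+N_L(t)$. For each $\ell\in\cI_A$, Theorem \ref{thm:F1H-ub} yields $\avR_\ell^H\le 1/\avT_\ell^\star$; since the outcomes of attempts on UE $\ell$ are independent Bernoulli$(p_\ell)$ trials, the strong law of large numbers gives $\hat{R}_\ell(t)/t\to \avR_\ell^H/p_\ell$ a.s., so
\begin{equation*}
\limsup_{t\to\infty}\frac{N_A(t)}{t}\;\le\; \sum_{\ell\in\cI_A}\frac{1}{p_\ell\avT_\ell^\star}\;\le\;\zeta,
\end{equation*}
where the last inequality is the feasibility constraint \eqref{eq:avT-constraint}. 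For each $j\in\cI_L$, the number of successes up to $t$ cannot exceed the number of arrivals $q_jt+O(\sqrt{t})$, and the same Bernoulli$(p_j)$ accounting gives $\limsup\hat{R}_j(t)/t\le q_j/p_j$ a.s.; summing, $\limsup N_L(t)/t\le\sum_{j\in\cI_L}q_j/p_j$. Subtracting from $t$ yields
\begin{equation*}
\liminf_{t\to\infty}\frac{N_R(t)}{t}\;\ge\; 1-\zeta-\sum_{j\in\cI_L}\frac{q_j}{p_j}\;=\;\sum_{k\in\cI_R}\frac{\alpha_k}{p_k}.
\end{equation*}

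In the second step I would translate this aggregate bound on $N_R(t)$ into a per-UE bound via the deficit index $Y_k(t):=\alpha_k t/p_k-\hat{R}_k(t)$ that $\pi_H$ maximises whenever $\cS$ is empty. Summing its one-step recursion gives $\sum_{k\in\cI_R}Y_k(t)=t\sum_k\alpha_k/p_k-N_R(t)$, which is $o(t)$ a.s.\ by Step 1. To sharpen this into an individual bound I would adapt the Foster--Lyapunov argument of \cite{hou2009theory}: because $\pi_H$ serves $k^\star=\arg\max_k Y_k(t)$ in every $\cS$-empty slot, the quadratic potential $\Phi(t)=\sum_{k\in\cI_R}(Y_k(t))_+^2$ has strictly negative one-step drift whenever $\max_k Y_k(t)$ exceeds a constant depending only on $\{\alpha_k,p_k\}$ and on the slack exposed in Step 1. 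Standard Foster arguments then give $Y_k(t)=o(t)$ a.s., so $\hat{R}_k(t)\ge \alpha_k t/p_k-o(t)$, and combining with the Bernoulli$(p_k)$ LLN yields $\avR_k^H=\lim p_k\hat{R}_k(t)/t\ge\alpha_k$ for every $k\in\cI_R$.

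The main obstacle is the per-UE deficit bound in Step 2: while the capacity accounting in Step 1 is tight enough to conclude that the \emph{total} deficit is sublinear, converting this into sublinearity of each individual deficit requires carefully exploiting the $\arg\max$ rule and the strict slack left by \eqref{eq:avT-constraint}, and handling the fact that the $\cS$-empty slots are a stochastic subsequence rather than a deterministic pattern. A minor but conceptually useful point inside Step 1 is that we do not need to prove stability of the queue of $\cI_L$ packets in $\cS$: the bound $\limsup\hat{R}_j(t)/t\le q_j/p_j$ only uses that successes cannot exceed arrivals, which keeps the capacity accounting clean even if some $\cI_L$ packets remain untransmitted for long stretches.
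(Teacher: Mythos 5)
Your proposal takes essentially the same route as the paper: Step 1 reproduces the paper's Appendix argument (Lemma \ref{lemma:throughput-piH}) that every slot is used, $\avT_\ell^H\ge\avT_\ell^\star$ from Theorem \ref{thm:F1H-ub} together with constraint \eqref{eq:avT-constraint} bounds the $\cI_A$ attempt rate, and successes-cannot-exceed-arrivals bounds the $\cI_L$ attempt rate, yielding $\lim_{t\to\infty}\sum_{k\in\cI_R}\hat{R}_k^H(t)/t\ge\sum_{k\in\cI_R}\alpha_k/p_k$; Step 2 then invokes the feasibility optimality of the largest-deficit-first rule exactly as the paper does. The only difference is cosmetic: you sketch re-deriving that second step via a Foster--Lyapunov drift on $\sum_k (Y_k)_+^2$, whereas the paper simply cites Theorem $2$ of \cite{hou2009theory}.
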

\begin{proof}
	Recall that in each of the $\sum_{k\in\cI_k}\hat{R}_k^H(t)$ many slots in which $\pi_H$ transmits to UEs in $\cI_R$, $\pi_H$ transmits to the UE $k\in\cI_R$ for which $t\alpha_k/p_k-\hat{R}_k^H(t)$ is largest. In \cite{hou2009theory} (Theorem $2$), it has been shown that if $\underset{t\to\infty}{\lim}\sum_{k\in\cI_R}\hat{R}_k^H(t)/t\ge \sum_{k\in\cI_R}\alpha_k/p_k$, then the above step is feasibility optimal, i.e., $\avR_k\ge\alpha_k$, $\forall k$ with probability $1$. Also, we show in Appendix \ref{app:proof-lemma-throughput-piH} that under $\pi_H$, $\underset{t\to\infty}{\lim}\sum_{k\in\cI_R}\hat{R}_k^H(t)/t\ge \sum_{k\in\cI_R}\alpha_k/p_k$. Hence, $\pi_H$ satisfies the throughput constraint \eqref{eq:constraint} for all UEs $k\in\cI_R$.
\end{proof}

\section{Algorithms for Problem $\eqref{eq:prob-form-2}$} \label{sec:find-Lagrangian}

In this section, we propose two variants of policy $\pi_H$ (Algorithm \ref{algo:policy}) as candidate solutions of problem \eqref{eq:prob-form-2}: i) hierarchical index policy with virtual weights $\pi_{VW}$, and ii) hierarchical index policy with randomization $\pi_{RD}$. 

\subsubsection{Hierarchical Index Policy with Virtual Weights $\pi_{VW}$}

Consider policy $\pi_{VW}$ (Algorithm \ref{algo:find-weights}), that to begin with, assigns some (virtual) weight $\rho_j$ to each UE $j\in\cI_L$. Subsequently, in each time slot $t$, $\pi_{VW}$ follows the hierarchical index policy $\pi_H$ (Algorithm \ref{algo:policy}) to minimize the cost function \eqref{eq:objective-equiv}, and periodically (after every $f$ number of slots, for some constant $f\in\bbN$), updates the virtual weight $\rho_j$, for each UE $j\in\cI_L$, depending on the difference between the current average latency $\avL_j(t)$ \eqref{eq:avLatency-t} and the maximum tolerable average latency $\beta_j$ \eqref{eq:alt-constraint-latency}.
In particular, for updating $\rho_j$'s, $\pi_{VW}$ uses gradient-decent based technique, whereby the updated weight $\rho_j:=\max\{0,\rho_j-\eta \cdot (\beta_j-\avL_j(t))\}$ for some constant step size $\eta>0$. 

Note that in $\pi_H$, the role of $\rho_j$'s (specifically for UEs $j\in\cI_L$) is only in deciding which packet to transmit from set $\cS$. If $\rho_j$ is large, then the corresponding index $\rho_j p_j/q_j$ will also be large. Thus, the latest packet of UE $j$ will be given higher priority compared to the other packets in set $\cS$. Therefore, the average latency of UE $j$ will be smaller. Similarly, the average latency of UE $j$ will be large if $\rho_j$ is small. If $\rho_j$ is small relative to the weights $\rho_i$ for UEs $i\in\cI_A$, then packets for UEs $i\in\cI_A$ will be given higher priority, thus minimizing their average latency (average AoI of UE $i$).
While updating $\rho_j$'s under $\pi_{VW}$, the idea is to converge to the smallest weight $\rho_j\ge 0$ for which the latency constraint \eqref{eq:alt-constraint-latency} is satisfied (average latency $\avL_j$ \eqref{eq:avLatency} is less than $\beta_j$). 

\begin{theorem}
	$\pi_{VW}$ satisfies the throughput constraint \eqref{eq:alt-constraint-throughput} (if the constraint is feasible; Theorem \ref{thm:feasibility}).
\end{theorem}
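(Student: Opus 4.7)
The plan is to reduce the claim directly to Theorem \ref{thm:throughput-constraint-satisfied-piH} by showing that, from the perspective of UEs in $\cI_R$, the policy $\pi_{VW}$ is indistinguishable from $\pi_H$. The virtual weights $\rho_j$ that $\pi_{VW}$ tunes over time pertain only to UEs $j\in\cI_L$; they do not enter the definition of $\avT_\ell^\star$ in \eqref{eq:compute-Tstar}, which depends only on $\{\rho_\ell\}_{\ell\in\cI_A}$ and on the fixed system parameters $\{q_j,p_j,\alpha_k,p_k\}$. Hence the threshold $\lceil \avT_\ell^\star - 1/q_\ell\rceil$ governing the counter $b_\ell$ is unchanged across slots, and consequently the rule by which packets are inserted into the high-priority set $\cS$ is exactly the rule used in $\pi_H$.

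Next, I would observe that the weights $\rho_j$ influence Codeblock \ref{algo:policy-Packet-Prioritization} only through the index value $W_j=\rho_j p_j/q_j$ assigned to a packet of UE $j\in\cI_L$ already being admitted to $\cS$. They therefore affect only the ordering within $\cS$, not whether $\cS$ is empty in a given slot. Thus the set of slots in which $\pi_{VW}$ transmits to some UE in $\cI_A\cup\cI_L$ coincides, sample-path-wise, with the corresponding set under $\pi_H$ (for the same realization of arrivals and channel outcomes), and so does the complementary set of slots devoted to $\cI_R$. In particular, the total number of transmission attempts to $\cI_R$ satisfies
\begin{equation*}
\lim_{t\to\infty}\frac{1}{t}\sum_{k\in\cI_R}\hat{R}_k^{VW}(t)\;=\;\lim_{t\to\infty}\frac{1}{t}\sum_{k\in\cI_R}\hat{R}_k^{H}(t)\;\ge\;\sum_{k\in\cI_R}\frac{\alpha_k}{p_k},
\end{equation*}
where the last inequality is the key estimate established in Appendix \ref{app:proof-lemma-throughput-piH} for $\pi_H$.

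Finally, within those slots, $\pi_{VW}$ transmits (as does $\pi_H$) to $k^\star=\arg\max_{k\in\cI_R}\{\alpha_k t/p_k-\hat{R}_k(t)\}$, which is precisely the max-deficit rule of \cite{hou2009theory}. Applying Theorem $2$ of \cite{hou2009theory} as in the proof of Theorem \ref{thm:throughput-constraint-satisfied-piH}, the combination of the sufficient attempt-rate bound above and the max-deficit scheduling rule is feasibility-optimal, giving $\avR_k^{VW}\ge\alpha_k$ for all $k\in\cI_R$ with probability $1$, as required.

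The only non-routine step is the second one: justifying that tuning $\rho_j$ for $j\in\cI_L$ cannot leak transmission opportunities away from $\cI_R$. This hinges crucially on the fact that the membership rule for $\cS$ (governed by the counters $b_\ell$ and the indicator $\rho_j>0$ for admitting $j\in\cI_L$ packets) depends on $\rho_j$ only through whether $\rho_j$ is zero or positive; the update $\rho_j:=\max\{0,\rho_j-\eta(\beta_j-\avL_j(t))\}$ keeps $\rho_j\ge 0$ at all times, and the admission of $j\in\cI_L$ packets into $\cS$ is mandated (by Remark \ref{remark:latency-tx-every-packet}) regardless of the numerical value of $\rho_j$. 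Once this is observed, the reduction to $\pi_H$ is clean.
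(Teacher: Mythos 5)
Your proposal is correct and takes the same route as the paper: both reduce the claim to Theorem~\ref{thm:throughput-constraint-satisfied-piH} by observing that the virtual weights $\rho_j$ ($j\in\cI_L$) affect only the ordering of packets within $\cS$, not the admission rule or the threshold $\avT_\ell^\star$, so they cannot shrink the supply of slots available to $\cI_R$, after which the max-deficit argument of \cite{hou2009theory} applies verbatim. One small over-statement worth flagging: the exact sample-path coincidence you assert between the slot-sets devoted to $\cI_A\cup\cI_L$ under $\pi_H$ and $\pi_{VW}$ can actually fail once $\rho_j$ is driven to $0$ (then $W_j=0$ and $\pi_{VW}$ may serve $\cI_R$ while a $\cI_L$ packet is still pending), but the discrepancy only hands $\cI_R$ \emph{more} slots, so the inequality you need still holds --- which is precisely why the paper is careful to claim only that the weights ``cannot decrease'' the number of empty-$\cS$ slots rather than that the two policies agree slot-by-slot.
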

\begin{proof}
	Note that the only difference in $\pi_{VW}$ and $\pi_H$ is that the weights $\rho_j$, for $j\in\cI_L$ are variable. Since $\rho_j$'s, for $j\in\cI_L$, have no role in deciding the size of set $\cS$ (i.e. the number of high priority packets), they cannot decrease the number of slots in which set $\cS$ is empty (i.e., the number of slots in which UEs $k\in\cI_R$ get to transmit). Hence, the throughput of UEs $k\in\cI_R$ under $\pi_{VW}$ is identical to the  throughput under $\pi_H$. Since $\pi_H$ satisfies the throughput constraint \eqref{eq:alt-constraint-throughput} (Theorem \ref{thm:throughput-constraint-satisfied-piH}), this implies that $\pi_{VW}$ must also satisfy the throughput constraint \eqref{eq:alt-constraint-throughput}.
\end{proof}
 
\begin{algorithm}
	\caption{Hierarchical Index Policy with Virtual Weights.} 
	\label{algo:find-weights}
	\begin{algorithmic} [1]
		\STATE Define constants: $f\in\bbN$ and $\eta>0$;
		\STATE Initialize variables $a_\ell=0$, $b_\ell=0$ ($\forall \ell\in\cI_A$) and $W_\ell=0$, ($\forall \ell\in\cI_A\cup\cI_L$); 
		\STATE Initialize (virtual) weight $\rho_j=1$, $\forall j\in\cI_L$;
		\STATE Compute $\avT_\ell^\star$ \eqref{eq:compute-Tstar};
		\FOR{each slot $t\in\{1,2,3,\cdots\}$}
		\STATE \emph{/* Update Virtual Weights $\rho_j$ ($\forall j\in\cI_L$) */}  
		\IF{$t\pmod {f} == 0$}
		\STATE update $\rho_j:=\max\{0,\rho_j-\eta \cdot (\beta_j-\avL_j(t))\}$, $\forall j\in\cI_L$;
		\ENDIF
		\STATE \textbf{Update\_Index\_UEs\_$\mathbf{\cI_A\cup\cI_L}$} (Codeblock \ref{algo:policy-Packet-Prioritization}); 		
		\STATE \textbf{Transmit\_Packet} (Codeblock \ref{algo:policy-Packet-Transmission});
		\ENDFOR
	\end{algorithmic}
\end{algorithm}

\subsubsection{Hierarchical Index Policy with Randomization $\pi_{RD}$}

Consider a Geo/Geo/1 queue with infinite buffer space (and single server), where packet inter-arrival and service times are geometrically distributed with rate $q_\ell$ and $p_\ell$ respectively. 
As shown in \cite{srikant2013communication} (sec. 3.4.2), for such a queue, under all work-conserving scheduling algorithms\footnote{Scheduling algorithms that always serve some packet unless the queue is empty.} the average latency \eqref{eq:avLatency} is $\frac{1-p_\ell}{p_\ell-q_\ell}+1$.\footnote{The additive constant $1$ follows because as per the considered definition, minimum latency for any packet is at least $1$.} 
Therefore, if $p_\ell\ge q_\ell+(1-q_\ell)/\beta_\ell$, then the average latency will be at most $\beta_\ell$.
Next, we use this result to derive a policy that under the following assumption, guarantees to satisfy both the latency and throughput constraints \eqref{eq:alt-constraint-latency}--\eqref{eq:alt-constraint-throughput}. 

\begin{assumption} \label{assume:sum-theta-less-than-1}
	For each UE $j\in\cI_L$, let 
	\begin{align} \label{eq:theta-j}
		\theta_j=\frac{1}{p_\ell}\left(q_j+\frac{1-q_j}{\beta_j}\right).
	\end{align}
For the rest of this section, we assume that $\sum_{j\in\cI_L}\theta_j\le 1$.
\end{assumption}

Consider the `hierarchical index policy with randomization' $\pi_{RD}$ (Algorithm \ref{algo:alt-policy}) that operates in $3$ steps. In step $1$,  
$\pi_{RD}$ selects high priority packets for UEs $i\in\cI_A$ using the threshold-based counter $b_\ell$, and assigns them an index exactly as $\pi_H$ does. However, unlike $\pi_H$, UEs $j\in\cI_L$ are not considered in this step. 
In step $2$, $\pi_{RD}$ selects the UE $\ell^\star\in\cI_A$ with largest index $W_{\ell^\star}$. If for all UEs in $I_A$ the index is $0$ (i.e., no packet to transmit to UEs in $\cI_A$), (only) then $\pi_{RD}$ selects the UE $\ell^\star\in\cI_R$ for which  $(\alpha_{\ell^\star} t)/p_{\ell^\star}-\hat{R}_{\ell^\star}(t)$ is maximum. Note that steps $1-2$ of $\pi_{RD}$ are identical to policy $\pi_H$, except that the packets for UEs $j\in\cI_L$ are not considered. 

In the final step (step $3$), $\pi_{RD}$ picks a UE $u$ from the set $\cI_L\cup\{\ell^\star\}$ following a randomized strategy, and transmits its latest packet. In particular, in any slot $t$, let $\cI_L^S(t)\subseteq$ denote the subset of UEs in $\cI_L$ for which there is a packet to transmit. 
$\pi_{RD}$ assigns probability $\theta_j$ \eqref{eq:theta-j} to each UE $j\in\cI_L^S(t)$, and probability $\theta_{\ell^\star}=1-\sum_{j\in\cI_L^S(t)}\theta_j$ to UE $\ell^\star$. Then, among all the UEs in set $\cI_L^S\cup\{\ell^\star\}$, $\pi_{RD}$ picks a UE $u$ with probability $\theta_u$, and transmits its latest packet.

The key property of policy $\pi_{RD}$ is as follows.
\begin{theorem}
	Under Assumption \ref{assume:sum-theta-less-than-1}, $\pi_{RD}$ satisfies the average latency constraint \eqref{eq:alt-constraint-latency} and throughput constraint \eqref{eq:alt-constraint-throughput}, for all UEs $j\in\cI_L$ and UEs $k\in\cI_R$, respectively.
\end{theorem}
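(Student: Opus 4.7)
The plan is to verify the two constraints separately. For the latency bound on each UE $j\in\cI_L$, I would reduce the queue of $j$-packets at the BS to a standalone Geo/Geo/1 queue and apply the mean-delay formula quoted from \cite{srikant2013communication}. For the throughput lower bound on each UE $k\in\cI_R$, I would mirror the proof of Theorem~\ref{thm:throughput-constraint-satisfied-piH}.

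For the latency part, fix $j\in\cI_L$ and inspect only the queue of $j$-packets buffered at the BS. Arrivals into this queue are i.i.d.\ Bernoulli($q_j$) by the system model. Whenever the queue is non-empty (equivalently, $j\in\cI_L^S(t)$), Step~3 of $\pi_{RD}$ selects UE $j$ with probability exactly $\theta_j$, and conditional on selection the transmission succeeds with probability $p_j$ independently of everything else. Hence, conditional on queue non-emptiness, each slot produces a successful departure with probability $\theta_j p_j$, i.i.d.\ across slots. The resulting queue is Geo/Geo/1 with arrival rate $q_j$ and service rate $\theta_j p_j = q_j + (1-q_j)/\beta_j > q_j$, so substituting into $\avL_j = (1 - \theta_j p_j)/(\theta_j p_j - q_j) + 1$ and simplifying yields $\avL_j = \beta_j$, meeting \eqref{eq:alt-constraint-latency}.

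For the throughput part, I would follow the template of Theorem~\ref{thm:throughput-constraint-satisfied-piH}. Whenever no UE in $\cI_A$ currently holds a high-priority packet in $\cS$, $\pi_{RD}$ sets $\ell^\star$ by exactly the same $\arg\max_k\{\alpha_k t/p_k - \hat{R}_k(t)\}$ rule as $\pi_H$. It therefore suffices to show $\lim_{t\to\infty}\sum_{k\in\cI_R}\hat{R}_k^{RD}(t)/t \ge \sum_{k\in\cI_R}\alpha_k/p_k$, for then Theorem~2 of \cite{hou2009theory} immediately yields $\avR_k^{RD} \ge \alpha_k$. The long-run fraction of slots consumed by $\cI_L$ transmissions equals $\sum_{j\in\cI_L} q_j/p_j$ (every arriving $j$-packet is eventually served, each taking $1/p_j$ attempts on average), and the counter $b_\ell$ caps the $\cI_A$ transmission rate at $\sum_{\ell\in\cI_A} 1/(p_\ell \avT_\ell^\star) \le \zeta$ by the constraint \eqref{eq:avT-constraint} defining $\avT_\ell^\star$, exactly as exploited in Theorem~\ref{thm:F1H-ub}. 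The residual $\cI_R$ rate is thus at least $1 - \sum_{j\in\cI_L} q_j/p_j - \zeta = \sum_{k\in\cI_R} \alpha_k/p_k$, as required.

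The main obstacle is justifying the Geo/Geo/1 reduction despite the coupling that $\pi_{RD}$ introduces between queues. The key observation that should carry the argument is that $\theta_j$ is a pre-computed constant depending neither on the contents of $\cI_L^S(t)\setminus\{j\}$ nor on the identity of $\ell^\star$; the only dependence on the rest of the system enters through whether $j$ itself lies in $\cI_L^S(t)$, which is a property of $j$'s own queue state. A secondary technicality is the phrase ``transmit its latest packet'' for UEs in $\cI_L$: since the Geo/Geo/1 mean-delay formula is classically derived for work-conserving service, I would invoke the standard fact that for a single queue with i.i.d.\ Bernoulli arrivals and i.i.d.\ Bernoulli service the steady-state number in system is invariant across work-conserving disciplines, so Little's law returns the same $\avL_j$ regardless of whether the actual order is FIFO or LIFO.
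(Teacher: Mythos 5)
Your proposal takes essentially the same route as the paper: decouple each $j\in\cI_L$ into a standalone Geo/Geo/1 queue with effective service rate $\theta_j p_j=q_j+(1-q_j)/\beta_j$ and read off the mean latency from the quoted formula, then argue the throughput bound by mirroring the proof of Theorem~\ref{thm:throughput-constraint-satisfied-piH} (the $b_\ell$ cap leaves $\zeta$-fraction for $\cI_A$, $\sum_j q_j/p_j$ for $\cI_L$, and the residual $\sum_k\alpha_k/p_k$ for $\cI_R$). Your added observations---that $\theta_j$ is a fixed constant so the per-queue reduction is legitimate, and that work-conserving insensitivity covers the ``latest packet'' (LIFO) service order---tighten two steps the paper's proof leaves implicit, but the core argument is the same.
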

\begin{proof}
	In any slot $t$, if BS has a packet for UE $j\in\cI_L$, then under $\pi_{RD}$, the BS gets to transmit the packet to UE $j$ with probability $\theta_j$. Also, the transmission succeeds with probability $p_\ell$. Therefore, for UE $j$, in each slot $t$ when there is a packet to transmit, the effective service rate is $\hat{p}_j=\theta_j p_j=q_j+(1-q_j)/\beta_j$. Therefore, using the result for Geo/Geo/1 queue (discussed above), the average latency for UE $j$ is $\beta_j$, thus satisfying the average latency constraint \eqref{eq:alt-constraint-latency}. 
	
	Further, similar to policy $\pi_H$ (and $\pi_{VW}$), only as many packets of UEs $\ell\in\cI_A\cup\cI_L$ are considered for transmission, such that sufficient slots are available to satisfy the throughput constraint of UEs $k\in\cI_R$. Also, in slots in which the UEs in $\cI_R$ are served, the strategy used to choose which particular UE gets served is identical to that under policy $\pi_H$ (and $\pi_{VW}$). Therefore, similar to policy $\pi_H$ (and $\pi_{VW}$), $\pi_{RD}$ satisfies the throughput constraint \eqref{eq:alt-constraint-throughput}, $\forall k\in\cI_R$. 
\end{proof}

\begin{remark} \label{remark:NC-latency-1UE}
	When there is a signal UE in $\cI_L$, assumption $1$ is in fact a necessary condition for feasibility of the average latency constraint \eqref{eq:alt-constraint-latency}. This is obvious because for such a case, $\sum_{j\in\cI_L}\theta_j=\theta_{L1}>1$, which implies that the effective service rate needed to satisfy the latency constraint needs to be $\hat{p}_{L1}=\theta_{L1} p_{L1}>1$, which is not possible.
\end{remark}

\begin{algorithm}
	\caption{Hierarchical Index Policy with Randomization.}
	\label{algo:alt-policy}
	\begin{algorithmic} [1]
		\STATE Initialize variables $W_i=0$, $a_i=0$, $b_i=0$ ($\forall i\in\cI_A$); 
		\STATE Compute $\avT_i^\star$ \eqref{eq:compute-Tstar} ($\forall i\in\cI_A$); 
		\FOR{each slot $t\in\{1,2,3,\cdots\}$}
		\STATE \textbf{Update\_Index\_UEs\_$\mathbf{\cI_A}$} (Codeblock \ref{algo:random-policy-index-UE-AoI});
		\STATE \textbf{Pick\_Best\_UE\_$\mathbf{\cI_A\cup\cI_R}$} (Codeblock \ref{algo:random-policy-pick-bestUE-AoI-Throughput});		
		\STATE \textbf{Transmit\_Random\_UE\_$\mathbf{\cI}$} (Codeblock \ref{algo:random-policy-randomized-transmission});	    
		\ENDFOR
	\end{algorithmic}
\end{algorithm}

\begin{algorithm}
	\floatname{algorithm}{Codeblock}
	\caption{\textbf{Update\_Index\_UEs\_$\mathbf{\cI_A}$}} 
\label{algo:random-policy-index-UE-AoI}
\begin{algorithmic} [1]
	\IF{a packet arrives for any UE $i\in\cI_A$} 
	\IF{$t-a_i\ge \lceil \avT_i^\star-1/q_i\rceil$}
	\STATE $a_i=t$, $b_i=b_i+1$;
	\ENDIF
	\IF{$b_i>|\cR_i(t-1)|$}
	\STATE $W_i=\rho_i p_i (t-\lambda_i(t))$;
	\ENDIF
	\ENDIF
\end{algorithmic}
\end{algorithm}

\begin{algorithm}
	\floatname{algorithm}{Codeblock}
	\caption{\textbf{Pick\_Best\_UE\_$\mathbf{\cI_A\cup\cI_R}$}} 
\label{algo:random-policy-pick-bestUE-AoI-Throughput}
\begin{algorithmic} [1]
	\IF{$\max_{i\in\cI_A} W_i>0$} 
	\STATE $\ell^\star=\arg\max_{i\in\cI_A} W_i$; 
	\ELSE \STATE $\ell^\star=\arg\max_{k\in\cI_R}\left\{\frac{\alpha_k\cdot t}{p_k}-\hat{R}_k(t)\right\}$;
	\ENDIF
\end{algorithmic}
\end{algorithm}

\begin{algorithm}
	\floatname{algorithm}{Codeblock}
	\caption{\textbf{Transmit\_Random\_UE\_$\mathbf{\cI}$}} 
\label{algo:random-policy-randomized-transmission}
\begin{algorithmic} [1]
	\STATE $\cI_L^S=\{j\in\cI_L| \text{BS has a packet to transmit to UE $j$}\}$;
	\STATE compute $\theta_j$ \eqref{eq:theta-j}, $\forall j\in\cI_L^S$;
	\STATE $\theta_{\ell^\star}=1-\sum_{j\in\cI_L^S}\theta_j$;
	\STATE among all UEs in $\cI_L^S\cup\{\ell^\star\}$, choose a UE $u$ with probability $\theta_u$, and transmit its latest packet;
	\IF{$u\in\cI_A$ \AND transmission successful}
	\STATE $W_u=0$; 
	\ENDIF
\end{algorithmic}
\end{algorithm}

\section{Numerical Results} \label{sec:numerical-results}

In this section, we analyze the performance of the proposed policies using numerical simulation, and organize the discussion into two parts. In first part, we analyze policy $\pi_H$ (Algorithm \ref{algo:policy}) as a solution of problem \eqref{eq:prob-form-1}, and subsequently in the second part, we analyze policies $\pi_{VW}$ and $\pi_{RD}$ as a solution of problem \eqref{eq:prob-form-2}. 
For both the parts, we consider a system with three UEs: i) UE $1\in\cI_A$, ii) UE $2\in\cI_L$ and iii) UE $3\in\cI_R$, and fix the following simulation parameters: arrival rate $[q_1,q_2,q_3]=[0.9,0.2,1]$, transmission success probability $[p_1,p_2,p_3=0.7,0.8,0.9]$, weights $[\rho_1,\rho_2]=[1,1]$, and constants $[f,\eta]=[10000,0.1]$.

\subsection{Results for Policy $\pi_H$ (Algorithm \ref{algo:policy})} \label{sec:numerical-results-piH}
First, we simulate policy $\pi_H$ for different values of the throughput constraint $\alpha=\alpha_3$ (for UE $3\in\cI_R$). Figures \ref{fig:Throughput_vs_alpha} shows the plot of the throughput of the three UEs as a function of $\alpha$. From Figure \ref{fig:Throughput_vs_alpha}, we make the following three key observations. 

\begin{figure}[htbp]
	\centerline{\includegraphics[scale=0.5]{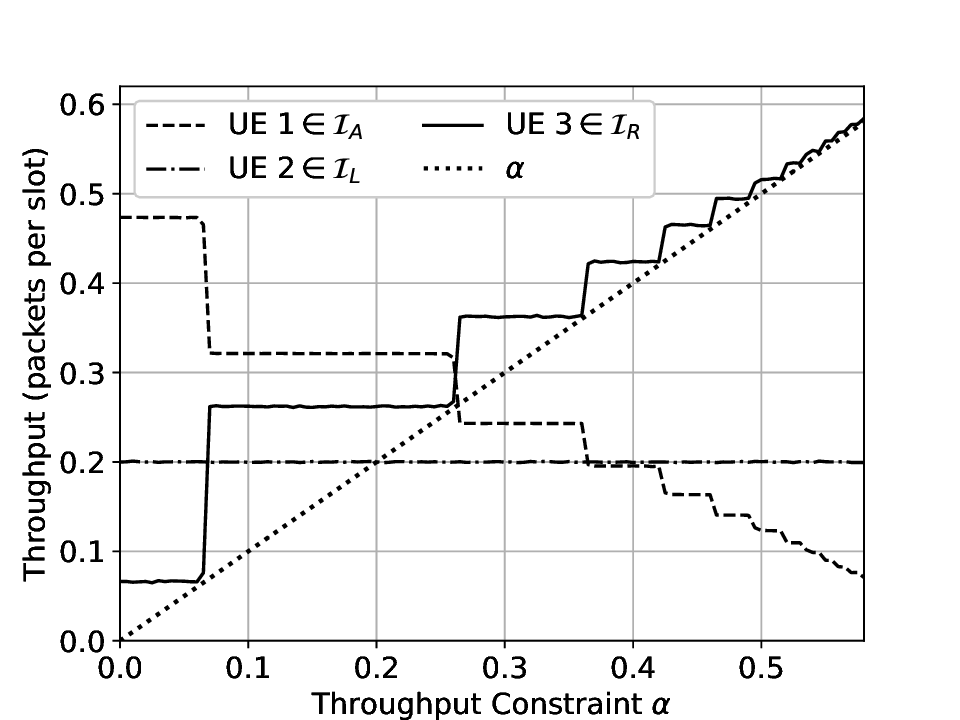}}
	\caption{Dependence of UE throughputs on constraint $\alpha$, under policy $\pi_H$.}
	\label{fig:Throughput_vs_alpha}
\end{figure}

	1) The throughput of UE $2\in\cI_L$ does not depend on $\alpha$, and is equal to the packet arrival rate $q_2=0.2$. This is expected because all packets for UE $2\in\cI_L$ must be transmitted (Lemma \ref{lemma:throughput-UEs}), and $\pi_H$ does exactly that. 
	
	2) Throughput $\avR_3^H$ for UE $3\in\cI_R$ increases with increase in $\alpha$. 
	This is because as shown in Theorem \ref{thm:throughput-constraint-satisfied-piH}, $\pi_H$ satisfies the throughput constraint \eqref{eq:constraint}, which mandates that $\avR_3^H\ge \alpha$. Also, the throughput $\avR_1^H$ for UE $1\in\cI_A$ decreases with increase in $\alpha$. This is expected because $\pi_H$ is a work-conserving policy that in each slot, serves some UE. Hence, throughput of some UE (UE $1$) must decrease to support higher throughput of other UE (UE $3$ in this case). The relation between the throughput of different UEs is captured in \eqref{eq:throughput-relation} (Appendix \ref{app:proof-lemma-throughput-UEs}).
	
	3) Under $\pi_H$, the throughputs $\avR_1^H$ and $\avR_3^H$ varies with $\alpha$ in steps. Also, the size of the steps decreases with increase in $\alpha$. The reason for the step-wise decrease in $\avR_1^H$ (increase in $\avR_3^H$) is the the choice of threshold $\lceil \avT_1^\star-1/q_1\rceil$ (used by $\pi_H$ for incrementing counter $b_1$) that varies step-wise with $\avT_1^\star$. Since $\avT_1^\star$ satisfies \eqref{eq:avT-constraint} (by definition), as $\alpha$ increases, $\avT_1^\star$ increases (at an increasing rate). Therefore, the average inter-arrival time of packets $\avT_1^H$ transmitted to UE $1\in\cI_A$, which is proportional to $\lceil \avT_1^\star-1/q_1\rceil$, increases step-wise. Therefore, $\avR_1^H=1/\avT_1^\star$ (Lemma \ref{lemma:relation-throughput-cycle-length}) decreases step-wise. Further, since 	$\avT_1^\star$ increases at an increasing rate with $\alpha$, therefore, $\lceil \avT_1^\star-1/q_1\rceil$ increases ($\avR_1^H$ decreases) in steps with decreasing step size.

Next, for the same simulation, Figure \ref{fig:Cost_vs_alpha} shows the plot of average AoI $\avA_1^H$ of UE $1\in\cI_A$, average latency $\avL_2^H$ of UE $2\in\cI_L$, and the total cost $\avA_1^H+\avL_2^H$. 
Recall that the throughput $\avR_1^H$ (for UE $1\in\cI_A$) decreases with increase in $\alpha$. Therefore, the average inter-transmission time of packets to UE $1$ increases, thereby increasing the average AoI $\avA_1^H$.  
Further, as $\alpha$ increases, fewer packets of UE $1\in\cI_A$ are added to set $\cS$ of high priority packets. Therefore, packets of UE $2\in\cI_L$ in set $\cS$ have to wait lesser, and hence, as $\alpha$ increases, the average latency $\avL_2^H$ of UE $2$ decreases. 

\begin{figure}[htbp]
	\centerline{\includegraphics[scale=0.5]{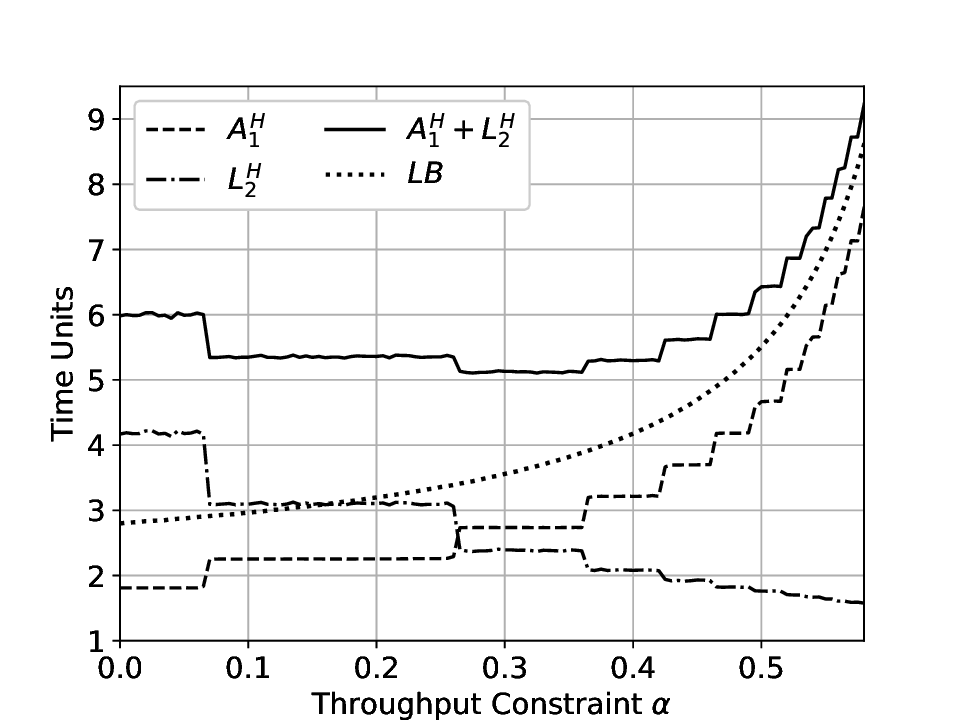}}
	\caption{Average AoI and average latency as a function of throughput constraint $\alpha$.}
	\label{fig:Cost_vs_alpha}
\end{figure}

Further, to compare the total cost \eqref{eq:objective-equiv} for $\pi_H$ with optimal cost, Figure \ref{fig:Cost_vs_alpha} shows a lower bound ($LB$) on the total cost \eqref{eq:objective-equiv} across all policies $\pi\in\Pi_F$, that is given as follows. 
\begin{align} \label{eq:LB}
	LB=\min_{\pi\in\Pi_F} F_1^\pi+\min_{\pi\in\Pi_F} F_2^\pi \ge LB_{F1} + LB_{F2}, 
\end{align}
where $LB_{F1}= \min_{\avT_1} \frac{1}{2}\left(\avT_1 +\frac{(1-q_1)/(2q_1)}{\avT_1}+1\right) \text{ (s.t. \eqref{eq:avT-constraint})}$ is a lower bound on $F_1^\pi$, $\forall \pi\in\Pi_F$ (follows from \eqref{eq:F1-lb}), and $LB_{F2}=\min_{\pi\in\Pi_F}\underset{t\to\infty}{\lim} \sum_{\stackrel{\ell\in\cI_L}{i\in\cG_\ell(t)}} (\rho_\ell/q_\ell)L_{\ell i}^\pi/t$ is a lower bound on $F_2^\pi$, $\forall \pi\in\Pi_F$ (since $\avL_{\ell i}^\pi\ge 1$, by definition). 
\begin{remark}
	We obtained $LB_{F1}$ by solving the convex optimization problem in its definition, while $LB_{F2}$ is obtained using simulation, by applying the $c\mu$-rule \cite{buyukkoc1985cmu}, which is known to be optimal for this case. 
\end{remark}
Note that because packets for UE $1\in\cI_A$ are not considered while deriving $LB_{F2}$, in general, $LB$ is a loose lower bound on the optimal cost. Therefore, $LB$ should be much smaller than the total cost \eqref{eq:objective-equiv} for $\pi_H$. However, as $\alpha$ increases, $\pi_H$ transmits fewer packets to UE $1$, and hence,  $F_2^H$ approaches $LB_{F2}$, and the total cost \eqref{eq:objective-equiv} for $\pi_H$ approaches $LB$ (as shown in Figure \ref{fig:Cost_vs_alpha}). 
This suggests that with increase in $\alpha$, $\pi_H$ approaches optimality.
\begin{remark}
	In practice, $\pi_H$ may be close to optimal even for small $\alpha$'s. However, we cannot prove/disprove this fact using $LB$, as for small $\alpha$, $LB$ is a very poor estimate of optimal cost (for reason discussed above). 
\end{remark}

\subsection{Results for Policies $\pi_{VW}$ and $\pi_{RD}$ (Algorithms \ref{algo:find-weights} and \ref{algo:alt-policy})} 
Further, we consider the optimization problem \eqref{eq:prob-form-2}, and simulate policies $\pi_{VW}$ and $\pi_{RD}$ with throughput constraint $\alpha=\alpha_3=0.2$, for different values of the latency constraint $\beta=\beta_2$ (for UE $2\in\cI_L$). Figures \ref{fig:Weight_vs_Frame}--\ref{fig:Alt_Throughput} reveal some key results from the simulation, discussed next.

Recall that policy $\pi_{VW}$ differs from $\pi_H$ only in one step where it updates the virtual weights. Therefore, to analyze the effectiveness of this step, Figure \ref{fig:Weight_vs_Frame} plots the virtual weight $\rho_2$ against the number of update iterations, for different values of $\beta$. Note that for larger $\beta$, $\rho_2$ converges to a smaller value. When $\beta$ is too small (e.g. $\beta=1$, for which the latency constraint \eqref{eq:alt-constraint-latency} is infeasible), $\rho_2$ grows to infinity, while for large $\beta$, $\rho_2$ converges towards $0$. This behavior is expected as for smaller $\beta$, packets of UE $2\in\cI_L$ needs to be given higher priority over packets for UE $1\in\cI_A$. On the other hand, when $\beta$ is large, UE $2\in\cI_L$ can tolerate larger latency, and hence, packets of UE $1\in\cI_A$ is given higher priority so that its average AoI can be minimized. 
\begin{figure}[htbp]
	\centerline{\includegraphics[scale=0.5]{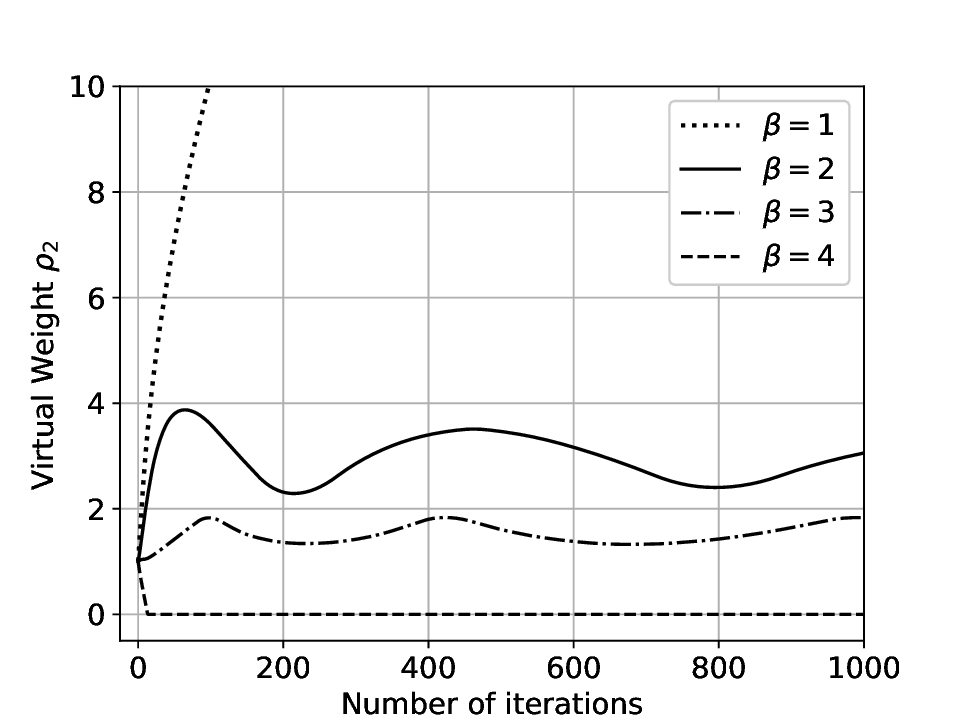}}
	\caption{Virtual weight $\rho_2$ as a function of number of iterations.}
	\label{fig:Weight_vs_Frame}
\end{figure}

Figure \ref{fig:Alt_AoI_Latency} shows the plot of average AoI $A_1^\pi$ (of UE $1\in\cI_A$) and average latency $L_2^\pi$ (of UE $2\in\cI_L$) under policies $\pi_{VW}$ and $\pi_{RD}$. From Remark \ref{remark:NC-latency-1UE} we know that for the considered system, when $\beta< 1.33$ the problem is infeasible as the average latency for UE $2\in\cI_L$ cannot be less than $1.33$. Therefore, when $\beta< 1.33$, the average latency $\avL_2^\pi=1.33$ for both $\pi_{VW}$ and $\pi_{RD}$. When $\beta\ge 1.33$, as $\beta$ increases, the priority (virtual weight $\rho_2$ under $\pi_{VW}$, and probability $\theta_2$ under $\pi_{RD}$) of the packets of UE $2\in\cI_L$ (compared to UE $1\in\cI_A$) is lowered such that they are just enough to satisfy the latency constraint \eqref{eq:alt-constraint-latency}. Hence, as $\beta$ increases, there is relative increase in the priority of packets of UE $1\in\cI_A$, thus lowering its average AoI. 

Apart from the commonalities, there are two major differences in the performance of policies $\pi_{VW}$ and $\pi_{RD}$ that can be observed from Figure \ref{fig:Alt_AoI_Latency}

	1) Beyond a certain value of $\beta$, under $\pi_{VW}$, the average latency $\avL_2$ (of UE $2$) stops increasing. This is because, under $\pi_{VW}$, the latency of UE $2\in\cI_L$ increases only if there is some packet of UE $1\in\cI_A$ of higher priority. But the number of packets of UE $1\in\cI_A$ are limited. Therefore, for large $\beta$, $\avL_2$ saturates (and virtual weight $\rho_2=0$; see Figure \ref{fig:Weight_vs_Frame}).
	
	In contrast, under $\pi_{RD}$, packet of UE $2\in\cI_L$ is transmitted with a fixed probability $\theta_2$ \eqref{eq:theta-j}. Therefore, even when there is no packet for UE $1\in\cI_A$, $\pi_{RD}$ may not transmit the packet of UE $2\in\cI_L$ (with probability $1-\theta_2$), and instead, transmit a packet of UE $3\in\cI_R$. Therefore, under $\pi_{RD}$, the average latency $\avL_2^{RD}$ increases linearly with $\beta$, without saturating.
	
	2) As $\beta$ increases (and policies $\pi_{VW}$ and $\pi_{RD}$ allow the average latency $L_2^\pi$ to increase), the average AoI decreases. However, the decrease is more evident in $\pi_{VW}$ because as $\beta$ increases, virtual weight $\rho_2$ (for UE $2\in\cI_L$) decreases, and hence, the packets of UE $1\in\cI_A$ are always preferred  over the packets of UE $2\in\cI_L$ for transmission. In contrast, $\pi_{RD}$ is a randomized algorithm, and even when $\beta$ is large, $\pi_{RD}$ may choose to transmit packets of UE $2\in\cI_L$ before packets of UE $1\in\cI_A$ with probability $\theta_2$ \eqref{eq:theta-j} (although $\theta_2$ decreases with increase in $\beta$, it is always positive). 

\begin{figure} 
	\centerline{\includegraphics[scale=0.5]{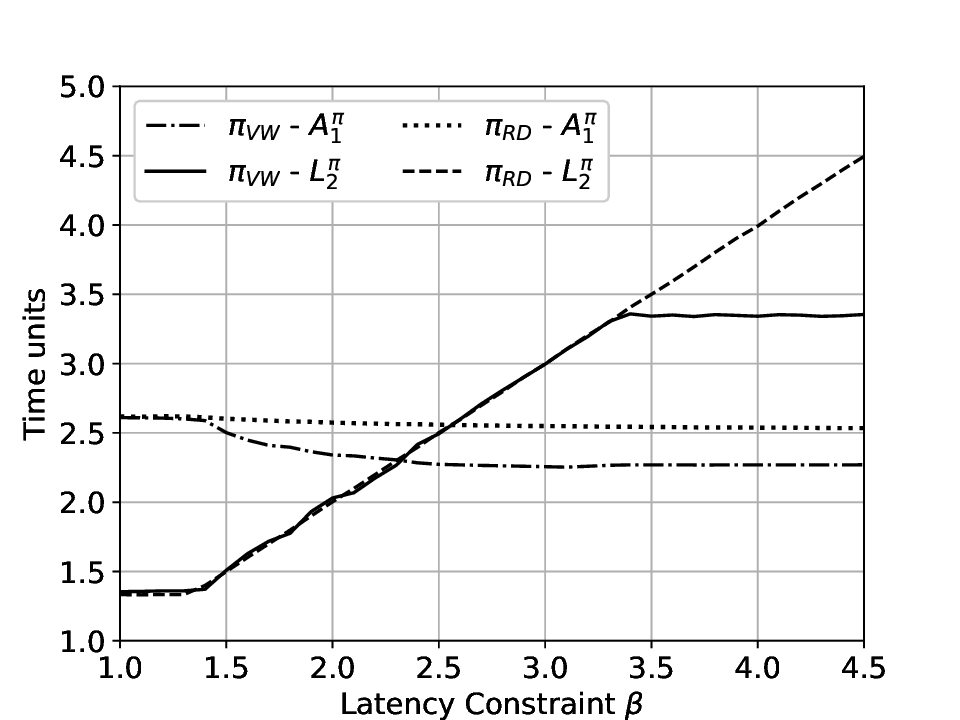}}
	\caption{Average AoI and average latency as a function of latency constraint $\beta$.}
	\label{fig:Alt_AoI_Latency}
\end{figure}
\begin{remark}
	In Figure \ref{fig:Alt_AoI_Latency}, note that under policy $\pi_{VW}$, the average latency $L_2^\pi$ increases linearly with increase in $\beta\in[2.5,3.8]$. In contrast, $\beta\in[2.5,3.8]$ has no significant effect on average AoI $A_1^\pi$. Therefore, for policy $\pi_{VW}$, the suitable operating point $\beta^{VW}\le 2.5$. 
\end{remark}

Finally, Figure \ref{fig:Alt_Throughput} shows that the throughput of the UEs do not depend on $\beta$, and are equal for both policies $\pi_{VW}$ and $\pi_{RD}$. This is expected, because $\beta$ only affects the order in which the packets of different UEs are transmitted, and not the number of transmitted packets.
\begin{figure}[htbp]
	\centerline{\includegraphics[scale=0.5]{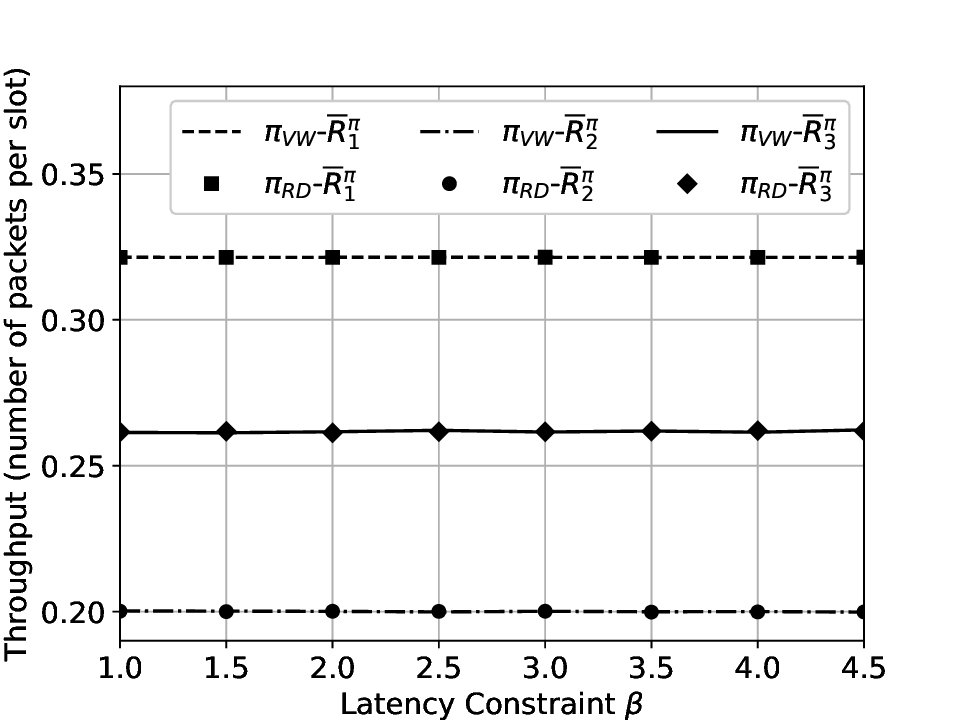}}
	\caption{Throughput of UEs as a function of latency constraint $\beta$.}
	\label{fig:Alt_Throughput}
\end{figure}

\section{Conclusion} \label{sec:conclusion}
We considered a base station (BS) that serves a set of heterogeneous users (UE) that may either require high throughput, low latency, or low age-of-information (AoI). 
In each time slot, the BS needs to choose which UE to serve such that the long-term QoS requirements of all the UEs are fulfilled. Hence, the BS needs to handle a three-way tradeoff between throughput, latency and AoI of the UEs. In this paper, we captured these tradeoffs via an optimization problem, and proposed candidate solutions (algorithms) for this problem. Further, we analyzed the algorithms both analytically and through simulations, and found that a hierarchical index based approach provides a general framework to handle the service requirements of UEs in terms of throughput, latency and AoI.

\appendices
\section{Proof of Lemma \ref{lemma:throughput-UEs}} \label{app:proof-lemma-throughput-UEs}

	1) For policies $\pi\in\Pi_F$, the average AoI $\avA_i^\pi$ (for UE $i\in\cI_A$) is finite, which is possible only if UE $i$ periodically receives new packets. Hence, the throughput $\avR_i^\pi$ must be strictly positive, $\forall i\in\cI_A$.
	
	2) From Remark \ref{remark:latency-tx-every-packet}, we know that the BS must transmit all the packets that arrive for UE $j\in\cI_L$.  
	Also, for UE $j\in\cI_L$, packets arrive at the BS at rate $q_j$. Therefore, for UE $j\in\cI_L$, the throughput \eqref{eq:throughput} must be $\avR_j^\pi=q_j$.
	
	3) Since $\pi\in\Pi_F$ satisfy the constraint \eqref{eq:constraint}, the throughput for UEs $k\in\cI_R$ must be at least $\alpha_k$. 
	
	4) For policies $\pi\in\Pi_F$ and UE $\ell\in\cI$, let $\hat{R}_\ell^\pi(t)$ and $R_\ell^\pi(t)=|\cR_\ell^\pi(t)|$ respectively denote the number of slots in which the BS transmits to UE $\ell$, and the number of slots in which the transmission is successful (i.e., the number of packets received at UE $\ell$).  
	Recall that whenever a packet is transmitted to UE $\ell\in\cI$, the transmission succeeds with a fixed probability $p_\ell$ (independent of everything else), and fails otherwise. Therefore, when BS transmits to UE $\ell$ in $\hat{R}_\ell^\pi(t)$ number of slots, the expected number of successful transmissions is $\bbE[R_\ell^\pi(t)]=p_\ell\hat{R}_\ell^\pi(t)$. Also, since $q_\ell,\alpha_k>0$ $\forall \ell\in\cI_L,\cI_R$, and the average AoI $\avA_\ell^\pi$'s are finite for policies $\pi\in\Pi_F$, as time $t\to\infty$, BS must transmit to each UE $\ell\in\cI=\cI_A\cup\cI_L\cup\cI_R$ in infinitely many slots. Therefore, using the strong law of large numbers, we get that for each UE $\ell\in\cI$, as $t\to\infty$, $R_\ell^\pi(t)=p_\ell\hat{R}_\ell^\pi(t)$ with probability $1$. Hence, $\forall \ell\in\cI$, the throughput $\avR_\ell^\pi=\lim_{t\to\infty}R_\ell^\pi(t)/t=\lim_{t\to\infty}p_\ell\hat{R}_\ell^\pi(t)/t$, which implies
	\begin{align} \label{eq:throughput-equiv}
		\frac{\avR_\ell^\pi}{p_\ell}=\lim_{t\to\infty}\frac{\hat{R}_\ell^\pi(t)}{t}.
	\end{align} 
	
	Further, recall that in any slot, BS may transmit to at most one UE. Therefore, under any policy $\pi\in\Pi_F$, $\sum_{i\in\cI_A}\hat{R}_i^\pi(t)+\sum_{j\in\cI_L}\hat{R}_j^\pi+\sum_{k\in\cI_R}\hat{R}_k^\pi\le t$. Dividing both sides by $t$, and using \eqref{eq:throughput-equiv}, we get
	\begin{align} \label{eq:throughput-relation}	\sum_{i\in\cI_A}\frac{\avR_i^\pi}{p_i}+\sum_{j\in\cI_L}\frac{\avR_j^\pi}{p_j}+\sum_{k\in\cI_R}\frac{\avR_k^\pi}{p_k}\le 1,
	\end{align}
	which together with the Results $2$ and $3$ of Lemma \ref{lemma:throughput-UEs}, implies 	$\sum_{i\in\cI_A}\avR_i^\pi/p_i\le 1-\sum_{j\in\cI_L}q_j/p_j-\sum_{k\in\cI_R}\alpha_k/p_k$.

\section{Proof of Lemma \ref{lemma:relation-throughput-cycle-length}} \label{app:proof-lemma-relation-throughput-cycle-length} 

As argued in the proof of Lemma \ref{lemma:throughput-UEs} (Result $4$; Appendix \ref{app:proof-lemma-throughput-UEs}), for any policy $\pi\in\Pi_F$, each UE $\ell\in\cI$ transmits infinitely many packets. Therefore, time $t$ can be expressed as the sum of inter-generation time of successive packets transmitted by any given UE $\ell\in\cI$, i.e., $t=\sum_{i\in\cR_\ell^\pi(t)}T_{\ell i}^\pi$. Therefore, substituting $t=\sum_{i\in\cR_\ell^\pi(t)}T_{\ell i}^\pi$ into \eqref{eq:throughput} and using \eqref{eq:avT}, we get the result.

\section{Proof of Lemma \ref{lemma:avAoI-general-expression}} \label{app:proof-lemma-avAoI-general-expression}

    Let $\cR_\ell^\pi=\underset{t\to\infty}{\lim}\cR_\ell^\pi(t)$.
	By definition, $\delta_{\ell i}^\pi=T_{\ell i}^\pi-\avT_\ell^\pi$, i.e., $T_{\ell i}^\pi=\avT_\ell^\pi+\delta_{\ell i}^\pi$. Therefore, 
	$(T_{\ell i}^\pi)^2=(\avT_\ell^\pi+\delta_{\ell i}^\pi)^2=(\avT_\ell^\pi)^2+(\delta_{\ell i}^\pi)^2+2\avT_\ell^\pi\delta_{\ell i}^\pi$. Taking the sum over $i\in\cR_\ell^\pi$, we get
	\begin{align} \label{eq:sum-T2}
		\sum_{i\in\cR_\ell^\pi}(T_{\ell i}^\pi)^2&=\sum_{i\in\cR_\ell^\pi}(\avT_\ell^\pi)^2+\sum_{i\in\cR_\ell^\pi}(\delta_{\ell i}^\pi)^2+\sum_{i\in\cR_\ell^\pi}2\avT_\ell^\pi\delta_{\ell i}^\pi, \nonumber \\
		&=|\cR_\ell^\pi|(\avT_\ell^\pi)^2+\sum_{i\in\cR_\ell^\pi}(\delta_{\ell i}^\pi)^2+2\avT_\ell^\pi\sum_{i\in\cR_\ell^\pi}\delta_{\ell i}^\pi, \nonumber \\
		&\stackrel{(a)}{=}|\cR_\ell^\pi|(\avT_\ell^\pi)^2+\sum_{i\in\cR_\ell^\pi}(\delta_{\ell i}^\pi)^2,
	\end{align}
where we get $(a)$ because $\avT_\ell^\pi$ being an average of $T_{\ell i}^\pi$'s ($\forall i\in\cR_\ell^\pi$), 	$\sum_{i\in\cR_\ell^\pi}\delta_{\ell i}^\pi=\sum_{i\in\cR_\ell^\pi}(T_{\ell i}^\pi-\avT_\ell^\pi)=0$.

Consider the following well-known result derived in  \cite{kadota2019minimizing}. 
\begin{lemma} \label{lemma:avAoI-general-expression-dummy}
	[Proposition $2$ in \cite{kadota2019minimizing}] Under any policy $\pi\in\Pi_F$, for each UE $\ell\in\cI_A$, the average AoI 
	\begin{align} \label{eq:avAoI-equiv-dummy}
		\avA_\ell^\pi=\lim_{t\to\infty}\frac{1}{t}\sum_{i\in\cR_\ell^\pi(t)}\left(\frac{(T_{\ell i}^\pi)^2}{2}+T_{\ell i}^\pi (L_{\ell i}^\pi-1)\right)+\frac{1}{2},
	\end{align}
	with probability $1$. 
\end{lemma}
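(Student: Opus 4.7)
The plan is to start from the Kadota et al.\ expression for $\avA_\ell^\pi$ stated in Lemma \ref{lemma:avAoI-general-expression-dummy}, and then rewrite the $\sum (T_{\ell i}^\pi)^2$ term using the mean/variance decomposition \eqref{eq:sum-T2} already established in the appendix. Concretely, Lemma \ref{lemma:avAoI-general-expression-dummy} gives
\begin{align*}
\avA_\ell^\pi=\lim_{t\to\infty}\frac{1}{t}\sum_{i\in\cR_\ell^\pi(t)}\frac{(T_{\ell i}^\pi)^2}{2}+\lim_{t\to\infty}\frac{1}{t}\sum_{i\in\cR_\ell^\pi(t)}T_{\ell i}^\pi (L_{\ell i}^\pi-1)+\frac{1}{2},
\end{align*}
so the only task is to simplify the first limit on the right-hand side into the form $\tfrac{1}{2}(\avT_\ell^\pi+\avdelta/\avT_\ell^\pi)$, after which the stated identity follows by grouping the $+\tfrac{1}{2}$ and the second limit.

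The first step is to apply \eqref{eq:sum-T2} (which is an algebraic identity in $\cR_\ell^\pi$, hence holds in the $t\to\infty$ sense since $\avT_\ell^\pi$ is itself defined as a limit and $\sum_{i\in\cR_\ell^\pi}\delta_{\ell i}^\pi=0$ by the mean-zero property of deviations). Substituting yields
\begin{align*}
\lim_{t\to\infty}\frac{1}{t}\sum_{i\in\cR_\ell^\pi(t)}\frac{(T_{\ell i}^\pi)^2}{2}
=\lim_{t\to\infty}\frac{|\cR_\ell^\pi(t)|(\avT_\ell^\pi)^2}{2t}+\lim_{t\to\infty}\frac{1}{2t}\sum_{i\in\cR_\ell^\pi(t)}(\delta_{\ell i}^\pi)^2.
\end{align*}
The second step is to invoke Lemma \ref{lemma:relation-throughput-cycle-length}, which gives $\lim_{t\to\infty}|\cR_\ell^\pi(t)|/t=\avR_\ell^\pi=1/\avT_\ell^\pi$ with probability $1$ (this follows since $|\cR_\ell^\pi(t)|/t$ is exactly $\avR_\ell^\pi$ in the limit). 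Applying this to the first term gives $(\avT_\ell^\pi)^2/(2\avT_\ell^\pi)=\avT_\ell^\pi/2$, and factoring $|\cR_\ell^\pi(t)|/t$ out of the second term gives
\begin{align*}
\lim_{t\to\infty}\frac{1}{2t}\sum_{i\in\cR_\ell^\pi(t)}(\delta_{\ell i}^\pi)^2=\frac{1}{2}\cdot\frac{1}{\avT_\ell^\pi}\cdot\avdelta,
\end{align*}
using the definition of $\avdelta$ in the lemma statement.

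Combining these pieces yields $\avA_\ell^\pi=\tfrac{1}{2}\avT_\ell^\pi+\tfrac{1}{2}\avdelta/\avT_\ell^\pi+\tfrac{1}{2}+\lim_{t\to\infty}\tfrac{1}{t}\sum_{i\in\cR_\ell^\pi(t)}T_{\ell i}^\pi (L_{\ell i}^\pi-1)$, which on collecting the constant and mean/variance terms inside one bracket is exactly \eqref{eq:avAoI-equiv}. The main (and really only) obstacle is a bookkeeping one: making sure the decomposition \eqref{eq:sum-T2} is applied validly in the limit, i.e.\ justifying the product-of-limits step $\frac{1}{t}\sum(\delta_{\ell i}^\pi)^2=\frac{|\cR_\ell^\pi(t)|}{t}\cdot\frac{1}{|\cR_\ell^\pi(t)|}\sum(\delta_{\ell i}^\pi)^2$, which is legitimate almost surely because both factor limits exist (the first by Lemma \ref{lemma:relation-throughput-cycle-length} and the second by the definition of $\avdelta$, both holding with probability $1$ since $\pi\in\Pi_F$ guarantees $|\cR_\ell^\pi(t)|\to\infty$).
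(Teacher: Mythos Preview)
You have proved the wrong statement. The lemma in question is Lemma~\ref{lemma:avAoI-general-expression-dummy} (the Kadota et~al.\ identity \eqref{eq:avAoI-equiv-dummy}), but your very first line \emph{assumes} precisely this identity and then proceeds to derive \eqref{eq:avAoI-equiv}, which is the content of Lemma~\ref{lemma:avAoI-general-expression}, not of Lemma~\ref{lemma:avAoI-general-expression-dummy}. In other words, you have reproduced the paper's Appendix~\ref{app:proof-lemma-avAoI-general-expression} argument (and done so correctly), but that appendix proves Lemma~\ref{lemma:avAoI-general-expression} by \emph{invoking} Lemma~\ref{lemma:avAoI-general-expression-dummy} as an input; it does not prove Lemma~\ref{lemma:avAoI-general-expression-dummy} itself.

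The paper, in fact, does not prove Lemma~\ref{lemma:avAoI-general-expression-dummy} at all: it is quoted verbatim as Proposition~2 of \cite{kadota2019minimizing}. If you want to supply a self-contained proof, the standard route is the area-under-the-sawtooth argument hinted at in Figure~\ref{fig:age}: between consecutive deliveries $r_{\ell(i-1)}^\pi$ and $r_{\ell i}^\pi$ the AoI grows linearly from $L_{\ell(i-1)}^\pi$ and the accumulated age over that interval is a trapezoid whose area, after reindexing by the inter-arrival time $T_{\ell i}^\pi$ and latency $L_{\ell i}^\pi$, equals $\tfrac{1}{2}(T_{\ell i}^\pi)^2+T_{\ell i}^\pi(L_{\ell i}^\pi-1)+\tfrac{1}{2}T_{\ell i}^\pi$. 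Summing over $i\in\cR_\ell^\pi(t)$, dividing by $t$, and using $\sum_i T_{\ell i}^\pi = t$ (as in Appendix~\ref{app:proof-lemma-relation-throughput-cycle-length}) yields \eqref{eq:avAoI-equiv-dummy}. None of your current write-up addresses this computation.
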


Substituting \eqref{eq:sum-T2} into \eqref{eq:avAoI-equiv-dummy}, and because $\underset{t\to\infty}{\lim}\sum_{i\in\cR_\ell^\pi(t)}T_{\ell i}^\pi=t$ (shown in the Proof of Lemma \ref{lemma:relation-throughput-cycle-length}; Appendix \ref{app:proof-lemma-relation-throughput-cycle-length}), we get 
\begin{align} \label{eq:avAoI-dummy-1}
	\avA_\ell^\pi-\lim_{t\to\infty}&\sum_{i\in\cR_\ell^\pi(t)}\frac{T_{\ell i}^\pi (L_{\ell i}^\pi-1)}{t}-\frac{1}{2} \nonumber \\
	&=\frac{|\cR_\ell^\pi|(\avT_\ell^\pi)^2+\sum_{i\in\cR_\ell^\pi}(\delta_{\ell i}^\pi)^2}{2\cdot \sum_{i\in\cR_\ell^\pi(t)}T_{\ell i}^\pi}, \nonumber \\ 
	&=\frac{(\avT_\ell^\pi)^2+\sum_{i\in\cR_\ell^\pi}(\delta_{\ell i}^\pi)^2/|\cR_\ell^\pi|}{2\cdot \sum_{i\in\cR_\ell^\pi}T_{\ell i}^\pi/|\cR_\ell^\pi|}, \nonumber \\ 
	&\stackrel{(a)}{=}\frac{\avT_\ell^\pi}{2}+\frac{\avdelta}{2 \avT_\ell^\pi}, 
\end{align}
where we get $(a)$ using \eqref{eq:avT}, and defining $\avdelta=\sum_{i\in\cR_\ell^\pi}(\delta_{\ell i}^\pi)^2/|\cR_\ell^\pi|$. 
Finally, rearranging the terms on the two sides of \eqref{eq:avAoI-dummy-1}, 
we get \eqref{eq:avAoI-equiv}.

\section{Proof of Lemma \ref{lemma:avdelta-lb}} \label{app:proof-avdelta-lb}

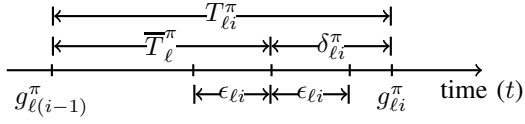
\begin{figure} 
	\begin{center}
		\begin{tikzpicture}[thick,scale=0.8, every node/.style={scale=1}]
		\draw[->] (-0.4,0) to (7.5,0) node[below]{time ($t$)};

		\draw (0.35,-0.1) node[below]{$g_{\ell (i-1)}^\pi$} to (0.35,0.1);
		\draw (2.7,-0.1) to (2.7,0.1);
		\draw (4,-0.1) to (4,0.1);
        \draw (5.3,-0.1) to (5.3,0.1); 
        \draw (6,-0.1) node[below]{$g_{\ell i}^\pi$} to (6,0.1);

		\draw[|<->|] (0.35,0.4) -- (4,0.4) node[rectangle,inner sep=-1pt,midway,fill=white]{$\avT_\ell^\pi$}; 
		\draw[|<->|] (0.35,0.9) -- (6,0.9) node[rectangle,inner sep=-1pt,midway,fill=white]{$T_{\ell i}^\pi$};
		\draw[<->|] (4,0.4) -- (6,0.4) node[rectangle,inner sep=-1pt,midway,fill=white]{$\delta_{\ell i}^\pi$};
		\draw[|<->|] (2.7,-0.4) -- (4,-0.4) node[rectangle,inner sep=-1pt,midway,fill=white]{$\epsilon_{\ell i}$}; 
		\draw[<->|] (4,-0.4) -- (5.3,-0.4) node[rectangle,inner sep=-1pt,midway,fill=white]{$\epsilon_{\ell i}$};

		\end{tikzpicture}
		\caption{If no packet is arrives in slots $(g_{\ell (i-1)}^\pi+(\avT_{\ell}^\pi-\epsilon_{\ell i})$ to $g_{\ell (i-1)}^\pi+(\avT_{\ell}^\pi+\epsilon_{\ell i})$, then $(\delta_{\ell i}^\pi)^2=(T_{\ell i}^\pi-\avT_{\ell}^\pi)^2\ge\epsilon_{\ell i}^2$.\vspace{-2ex}}  
		\label{fig:cycle-var} 
	\end{center}
\end{figure}
Note that under any policy that minimizes \eqref{eq:objective-equiv}, whenever BS transmits a packet to UE $\ell\in\cI_A$, it is the latest among all packets that arrived for UE $\ell$. This is because the transmission success probability is equal for all the packets for UE $\ell$, and transmitting latest packet results in greater reduction in AoI. Also, upon transmitting a new packet, the BS discards all the previously arrived packets for UE $\ell$ (i.e. never transmits them), as AoI only depends on the generation time of the latest packet that arrived at the BS, and has been received at the UE. Therefore, if BS successfully transmits a packet that arrived in slot $g_{\ell (i-1)}^\pi$, to UE $\ell$ in slot $r_{\ell i}^\pi\ge g_{\ell i}^\pi$, then the next packet that the BS transmits to UE $\ell$, must arrive in some slot $g_{\ell i}^\pi>r_{\ell (i-1)}^\pi$.
Since policy $\pi\in\Pi_F$ only has causal information, this implies that in slot $r_{\ell (i-1)}^\pi$, $\pi$ does not know the arrival time of the next packet, in particular, $g_{\ell i}^\pi$. Therefore, if no packet arrives in slots $g_{\ell (i-1)}^\pi+(\avT_\ell^\pi-\epsilon_{\ell i})$ to $g_{\ell (i-1)}^\pi+(\avT_\ell^\pi+\epsilon_{\ell i})$ (for non-negative integer $\epsilon_{\ell i}$), then $T_{\ell i}^\pi=g_{\ell i}^\pi-g_{\ell (i-1)}^\pi$ will either be smaller than $\avT_\ell^\pi-\epsilon_{\ell i}$, or greater than $\avT_\ell^\pi+\epsilon_{\ell i}$, independent of everything else. 
In other words, $(\delta_{\ell i}^\pi)^2>\epsilon_{\ell i}^2$, where $\epsilon_{\ell i}$'s are independent of each other, $\forall i\ge 1$. Hence, 
\begin{align} \label{eq:avdelta-lb-1}
	\avdelta=\frac{\sum_{i\in\cR_\ell^\pi}(\delta_{\ell i}^\pi)^2}{|\cR_\ell^\pi|}> \frac{\sum_{i\in\cR_\ell^\pi}\epsilon_{\ell i}^2}{|\cR_\ell^\pi|}.
\end{align}

Recall that in each slot, a packet arrives for UE $\ell$ (at the BS) with probability $q_\ell$ (does not arrive with probability $1-q_\ell$), independent of everything else. Therefore, the probability that no packet arrives for UE $\ell$ in slots $g_{\ell (i-1)}^\pi+(\avT_\ell^\pi-\epsilon_{\ell i})$ to $g_{\ell (i-1)}^\pi+(\avT_\ell^\pi+\epsilon_{\ell i})$ is $(1-q_\ell)^{2\epsilon_{\ell i}+1}$, $\forall i\ge 1$. Combining this with the fact that $\epsilon_{\ell i}$'s are independent $\forall i\ge 1$, we get that $\epsilon_{\ell i}$'s are independent and identically distributed random variables. Also, the number of transmitted packets $|\cR_\ell^\pi|\to\infty$ (as argued in the proof of Result $4$ of Lemma \ref{lemma:throughput-UEs}; Appendix \ref{app:proof-lemma-throughput-UEs}). Thus, applying the strong law of large numbers to \eqref{eq:avdelta-lb-1}, we get  
$\begin{aligned} 
	\avdelta\ge \bbE[\epsilon_{\ell i}^2]\stackrel{(a)}{>}(1-q_\ell)/2q_\ell^2,
\end{aligned}$
where $\bbE[\cdot]$ is the expectation operator, and $(a)$ follows from the following lemma.

\begin{lemma}
	$\bbE[\epsilon_{\ell i}^2]> (1-q_\ell)/2q_\ell^2$, $\forall \ell\in\cI_A$, and $i\ge 1$.
\end{lemma}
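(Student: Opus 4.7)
The strategy is direct computation: since packet arrivals for UE $\ell$ are i.i.d.\ Bernoulli$(q_\ell)$, the random variable $\epsilon_{\ell i}$ inherits a geometric-type tail that can be evaluated explicitly, after which the desired inequality reduces to a routine polynomial identity. I plan to proceed in three steps.

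First, I will identify the distribution of $\epsilon_{\ell i}$. By the preceding discussion, the event $\{\epsilon_{\ell i} \ge k\}$ is exactly the event that no arrival for UE $\ell$ occurs in a deterministic window of consecutive slots whose cardinality is linear in $k$ (concretely, $2k+1$ slots centred at $g_{\ell(i-1)}^\pi + \avT_\ell^\pi$, possibly with a $\pm 1$ correction depending on how the endpoints are rounded when $\avT_\ell^\pi$ is non-integer). Independence of Bernoulli arrivals across distinct slots then immediately gives $P(\epsilon_{\ell i} \ge k) = (1-q_\ell)^{2k + c}$ for an appropriate integer constant $c$.

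Second, I will compute $\mathbb{E}[\epsilon_{\ell i}^2]$ via the tail-sum identity $\mathbb{E}[X^2] = \sum_{k \ge 1}(2k-1)\, P(X \ge k)$, valid for any non-negative integer-valued random variable $X$. Substituting the tail from the first step and evaluating the resulting series using the closed form $\sum_{j \ge 0}(2j+1) x^j = (1+x)/(1-x)^2$ with $x = (1-q_\ell)^2$ yields a rational expression in $q_\ell$.

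Third, to verify $\mathbb{E}[\epsilon_{\ell i}^2] > (1-q_\ell)/(2 q_\ell^2)$, I will cross-multiply, cancel the common factor $(1-q_\ell)/q_\ell^2$, and reduce the claim to a strictly positive polynomial inequality in $q_\ell$; I expect this to collapse to a perfect-square identity of the form $(1-q_\ell)^2 > 0$, which is strict for $q_\ell \in (0,1)$. The main obstacle is the bookkeeping in the first step of determining the integer constant $c$ in the exponent: the cardinality of the integer slot window depends on the discretization convention for the window endpoints, and this choice controls the sharpness of the bound. Once $c$ is pinned down correctly, the remaining algebra is mechanical.
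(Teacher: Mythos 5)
Your proposal follows essentially the same route as the paper: both lower-bound the tail of $\epsilon_{\ell i}$ via the "no arrival in a window of $\approx 2k+1$ slots" event, convert $\bbE[\epsilon_{\ell i}^2]$ to an arithmetico-geometric series (your identity $\bbE[X^2]=\sum_{k\ge 1}(2k-1)\bbP(X\ge k)$ is exactly what the paper's $\sum_{x\ge 0}\bbP(\epsilon^2>x)$ becomes after grouping the $2x+1$ integers sharing $\lfloor\sqrt{x}\rfloor=x$), evaluate it in closed form, and reduce to a polynomial inequality. Two small corrections: the window event only gives a one-sided implication, so you should write $\bbP(\epsilon_{\ell i}\ge k)\ge(1-q_\ell)^{2k-1}$ rather than an equality (which is all you need), and the final polynomial inequality collapses to $q_\ell^2>0$, not $(1-q_\ell)^2>0$.
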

\begin{proof}
	$\bbE[\epsilon_{\ell i}^2]=\sum_{x=0}^\infty \bbP(\epsilon_{\ell i}^2>x)$.
	Since $\epsilon_{\ell i}$ is an integer, the probability that $\epsilon_{\ell i}^2>x$, i.e. no packet arrives for UE $\ell$ in slots $g_{\ell (i-1)}^\pi+(\avT_\ell^\pi-\sqrt{x})$ to $g_{\ell (i-1)}^\pi+(\avT_\ell^\pi+\sqrt{x})$ is $\bbP(\epsilon_{\ell i}^2>x)\ge (1-q_\ell)^{2\lfloor \sqrt{x} \rfloor+1}$. Thus, 
	\begin{align} \label{eq:Exp-eps2}
		\bbE[\epsilon_{\ell i}^2]&\ge\sum_{x=0}^\infty (1-q_\ell)^{2\lfloor \sqrt{x} \rfloor+1}, \nonumber \\
		&=\sum_{x=0}^\infty (1-q_\ell)^{2x+1}((x+1)^2-x^2), \nonumber \\
		&=\sum_{x=0}^\infty (1-q_\ell)^{2x+1}(2x+1). 
	\end{align}
Computing the sum of the arithmetico-geometric progression in \eqref{eq:Exp-eps2}, we get 
$\begin{aligned}
\bbE[\epsilon_{\ell i}^2]\ge \frac{1-q_\ell}{q_\ell^2}\cdot\frac{2+q_\ell^2-2q_\ell}{4+q_\ell^2-4q}>\frac{1-q_\ell}{2q_\ell^2}.  
\end{aligned}$
\end{proof}


\section{Proof of Theorem \ref{thm:F1H-ub}} \label{app:proof-thm-F1H-ub}

For UEs $\ell\in\cI_A$, $\pi_H$ (Algorithm \ref{algo:policy}) increments counter $b_\ell$ (adds packet to set $\cS$) using a threshold criterion. Due to the threshold criterion, the inter-arrival time of packets transmitted to UE $\ell$ is $T_{\ell i}^H=\lceil\avT_\ell^\star-1/q_\ell\rceil+X_\ell$, where $X_\ell$ is a geometrically distributed random variable with mean $1/q_\ell$ that represents the number of slots (relative to a fixed slot) until a packet arrives at the BS for UE $\ell$. Therefore, $T_{\ell i}^H$'s, for all transmitted packets $i$, are independent and identically distributed with mean $\bbE[T_{\ell i}^H]\in[\avT_\ell^\star,\avT_\ell^\star+1)$ and variance $Var(T_{\ell i}^H)=(1-q_\ell)/q_\ell^2$. Hence, using the strong law of large numbers, with probability $1$, we get $\avT_\ell^H\in[\avT_\ell^\star, \avT_\ell^\star+1)$, and  
	$$\overline{(\delta_\ell^H)^2}=\lim_{t\to\infty}\sum_{i\in\cR_\ell^H(t)}\frac{(T_{\ell i}^H-\avT_\ell^H)^2}{|\cR_\ell^H(t)|}= \frac{(1-q_\ell)}{q_\ell^2}, $$ 
$\forall \ell\in\cI_A$. 	
Further, substituting these results in \eqref{eq:objective-equiv-term1}, we get 
\begin{align*}
	F_1^H &< \sum_{\ell\in\cI_A}\frac{\rho_\ell}{2}\bigg[(\avT_\ell^\star+1)+\frac{(1-q_\ell)/q_\ell^2}{\avT_\ell^\star}+1\bigg], \nonumber \\
	&\stackrel{(a)}{=}\min_{\avT_\ell, \forall \ell\in\cI_A} \sum_{\ell\in\cI_A}\frac{\rho_\ell}{2}\bigg[\avT_\ell+\frac{(1-q_\ell)/q_\ell^2}{\avT_\ell}+2\bigg]\ \ \text{s.t. \eqref{eq:avT-constraint}}, \nonumber \\
	&\stackrel{(b)}{<}2\cdot \min_{\pi\in\Pi_F} \sum_{\ell\in\cI_A}\frac{\rho_\ell}{2}\bigg[\avT_\ell^\pi+\frac{(1-q_\ell)/2q_\ell^2}{\avT_\ell^\pi}+1\bigg], \nonumber \\
	&\stackrel{(c)}{=}2\cdot \min_{\pi\in\Pi_F} F_1^\pi, 
\end{align*}
where $(a)$ follows from the definition of $T_\ell^\star$ \eqref{eq:compute-Tstar}, $(b)$ follows because $\pi\in\Pi_F$ satisfy \eqref{eq:avT-constraint} (Lemma \ref{lemma:throughput-UEs}), and we get $(c)$ using \eqref{eq:F1-lb}.

\section{} 
\label{app:proof-lemma-throughput-piH}

\begin{lemma} \label{lemma:throughput-piH}
For policy $\pi_H$ (Algorithm \ref{algo:policy}),  
\begin{align} \label{eq:throughput-piH-dummy}
	\lim_{t\to\infty}\sum_{k\in\cI_R}\hat{R}_k^H(t)/t\ge \sum_{k\in\cI_R}\alpha_k/p_k,
\end{align}
where $\hat{R}_k^H(t)$ denotes the number of slots until slot $t$ in which BS attempts to transmit to UE $k$.
\end{lemma}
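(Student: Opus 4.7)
The plan is to lower-bound the fraction of slots in which $\pi_H$ serves $\cI_R$ by a work-conservation argument, after upper-bounding the fractions spent on $\cI_A$ and $\cI_L$. Under $\pi_H$ the BS transmits in every slot (by Remark~\ref{remark:q=1} each $k\in\cI_R$ always has a pending packet, and $\pi_H$ is work-conserving), so $\sum_{\ell\in\cI}\hat{R}_\ell^H(t)=t$ deterministically, which rearranges to
$$\sum_{k\in\cI_R}\hat{R}_k^H(t)/t \;=\; 1 - \sum_{i\in\cI_A}\hat{R}_i^H(t)/t - \sum_{j\in\cI_L}\hat{R}_j^H(t)/t.$$
It therefore suffices to upper bound the two sums on the right asymptotically.

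For each $j\in\cI_L$, I would use the trivial bound $|\cR_j^H(t)|\le|\cG_j(t)|$ (the BS cannot successfully deliver a packet that has not arrived), combined with $|\cG_j(t)|/t\to q_j$ a.s.\ (SLLN on the Bernoulli-$q_j$ arrivals). Applying the SLLN on the Bernoulli-$p_j$ transmission outcomes of the $\hat{R}_j^H(t)$ attempts — equivalently relation~\eqref{eq:throughput-equiv} — then yields $\limsup_t \hat{R}_j^H(t)/t \le q_j/p_j$.

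For each $i\in\cI_A$, the key observation is that $\pi_H$ can only successfully deliver a packet of UE $i$ that has been added to $\cS$ via an increment of $b_i$, so $|\cR_i^H(t)|\le b_i(t)$. From the inter-increment analysis used in the proof of Theorem~\ref{thm:F1H-ub}, successive increments of $b_i$ are separated by $\lceil \avT_i^\star - 1/q_i\rceil + X_i$ slots where $X_i$ is geometric with mean $1/q_i$, which has expectation at least $\avT_i^\star$; by SLLN this forces $\limsup_t b_i(t)/t \le 1/\avT_i^\star$. Combining with $|\cR_i^H(t)|/\hat{R}_i^H(t)\to p_i$ (SLLN on the Bernoulli-$p_i$ transmission outcomes, provided $\hat{R}_i^H(t)\to\infty$) delivers $\limsup_t \hat{R}_i^H(t)/t \le 1/(p_i\avT_i^\star)$. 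Plugging the two bounds into the slot identity gives
$$\lim_{t\to\infty}\sum_{k\in\cI_R}\hat{R}_k^H(t)/t \;\ge\; 1 - \sum_{i\in\cI_A}\frac{1}{p_i\avT_i^\star} - \sum_{j\in\cI_L}\frac{q_j}{p_j} \;\ge\; 1 - \zeta - \sum_{j\in\cI_L}\frac{q_j}{p_j} \;=\; \sum_{k\in\cI_R}\frac{\alpha_k}{p_k},$$
where the middle inequality invokes constraint~\eqref{eq:avT-constraint} defining $\avT_i^\star$ and the last equality uses $\zeta = 1 - \sum_j q_j/p_j - \sum_k \alpha_k/p_k$. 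The principal technical subtlety will be verifying the SLLN prerequisite $\hat{R}_i^H(t)\to\infty$ a.s.\ for each $i\in\cI_A$; I plan to handle this by noting that whenever a packet for UE $i$ resides in $\cS$, its index $\rho_i p_i(t-\lambda_i(t))$ grows linearly in $t$ (since $\lambda_i(t)$ is frozen until the next successful transmission), so the packet must eventually dominate every competing index in $\cS$ and be chosen for transmission, precluding any permanent stall in $\hat{R}_i^H(t)$.
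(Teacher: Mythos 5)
Your proof is correct and follows essentially the same route as the paper: a work-conservation slot identity, an upper bound $q_j/p_j$ on the fraction of slots spent serving each $j\in\cI_L$, an upper bound $1/(p_i\avT_i^\star)$ on the fraction spent on each $i\in\cI_A$, and then the constraint~\eqref{eq:avT-constraint} together with the definition of $\zeta$ to conclude. The only substantive difference is how you establish the $\cI_A$ bound: the paper invokes Lemma~\ref{lemma:relation-throughput-cycle-length} and the conclusion $\avT_i^H\ge\avT_i^\star$ of Theorem~\ref{thm:F1H-ub}, whereas you instead use the deterministic inequality $|\cR_i^H(t)|\le b_i(t)$ and bound the growth rate of the counter $b_i$ directly via the SLLN on inter-increment times. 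Your variant is, if anything, a touch cleaner here, since it does not route through Theorem~\ref{thm:F1H-ub}, whose proof rests on the informal assumption that packets added to $\cS$ at increments of $b_i$ are in fact eventually transmitted; the hard bound $|\cR_i^H(t)|\le b_i(t)$ sidesteps that assumption. Your closing remark about verifying $\hat{R}_i^H(t)\to\infty$ for the SLLN is prudent but, as you hint, not strictly needed: if $\hat{R}_i^H(t)$ stayed bounded, then $\hat{R}_i^H(t)/t\to 0$ and the desired upper bound holds trivially, so the case split costs nothing.
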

\begin{proof}
Note that $\pi_H$ transmits in every slot (when set $\cS$ is empty, $\pi_H$ transmits packets of UEs $k\in\cI_R$ that always have a packet to transmit; Remark \ref{remark:q=1}).
Therefore, $\sum_{i\in\cI_A}\hat{R}_i^H(t)+\sum_{j\in\cI_L}\hat{R}_j^H(t)+\sum_{k\in\cI_R}\hat{R}_k^H(t)= t$. Dividing both sides by $t$, we get
\begin{align} \label{eq:throughput-piH-1}
	\lim_{t\to\infty}\sum_{k\in\cI_R}\frac{\hat{R}_k^H(t)}{t}&=1-\lim_{t\to\infty}\bigg(\sum_{i\in\cI_A}\frac{\hat{R}_i^H(t)}{t}+\sum_{j\in\cI_L}\frac{\hat{R}_j^H(t)}{t}\bigg), \nonumber \\
	&\stackrel{(a)}{=}1-\left(\sum_{i\in\cI_A}\frac{\avR_i^H}{p_i}+\sum_{j\in\cI_L}\frac{\avR_j^H}{p_j}\right), \nonumber \\
	&\stackrel{(b)}{\ge} 1-\left(\sum_{i\in\cI_A}\frac{\avR_i^H}{p_i}+\sum_{j\in\cI_L}\frac{q_j}{p_j}\right),
\end{align}
where we get $(a)$ using \eqref{eq:throughput-equiv}, and $(b)$ follows because the throughput $\avR_j^H\le q_j$ (for UE $j\in\cI_L$, throughput cannot exceed the packet arrival rate). 

Further, since $\avT_i^\star$ satisfies \eqref{eq:avT-constraint}, we get 
\begin{align} \label{eq:throughput-piH-2}
	1-\sum_{j\in\cI_L}\frac{q_j}{p_j}-\sum_{k\in\cI_R}\frac{\alpha_k}{p_k}\ge \sum_{i\in\cI_A}\frac{1}{p_i\avT_i^\star} \stackrel{(a)}{\ge} \sum_{i\in\cI_A}\frac{\avR_i^H}{p_i},
\end{align}
where $(a)$ follows because i) $\avR_i^H=1/\avT_i^H$ (from Lemma \ref{lemma:relation-throughput-cycle-length}), and ii) $\avT_i^H\ge \avT_i^\star$ (shown in Theorem \ref{thm:F1H-ub}). Substituting \eqref{eq:throughput-piH-2} into \eqref{eq:throughput-piH-1}, we get \eqref{eq:throughput-piH-dummy}. 
\end{proof}


\bibliographystyle{IEEEtran}
\bibliography{Reflist}

\end{document}